\def\calC{\mathcal{C}}
\def\calD{\mathcal{D}}
\def\calG{\mathcal{G}}
\def\calL{\mathcal{L}}
\def\calP{\mathcal{P}}
\def\calE{\mathcal{E}}
\newtheorem{lemma}{Lemma}
\newtheorem{theorem}{Theorem}
\newtheorem{problem}{Problem}
\newtheorem{corollary}{Corollary}
\newenvironment{proof}{\noindent {\textbf{Proof:}}\rm}{\hfill $\Box$
\rm\bigskip}
\title{Improved Algorithms for Distance Selection and Related Problems\thanks{A preliminary version of this paper appeared in {\em Proceedings of the 31st Annual European Symposium on Algorithms (ESA 2023)}. This research was supported in part by NSF under Grants CCF-2005323 and CCF-2300356. }}
\author{
Haitao Wang\thanks{Kahlert School of Computing,
University of Utah, Salt Lake City, UT 84112, USA. {\tt haitao.wang@utah.edu}}
\and
Yiming Zhao\thanks{Corresponding author. Department of Computer Sciences,
Metropolitan State University of Denver, Denver, CO 80217, USA. {\tt yizhao@msudenver.edu}}
}
\begin{document}
\pagestyle{plain}
\date{}

\thispagestyle{empty}
\maketitle

\begin{abstract}
In this paper, we propose new techniques for solving geometric optimization problems involving interpoint distances of a point set in the plane. Given a set $P$ of $n$ points in the plane and an integer $1 \leq k \leq \binom{n}{2}$, the distance selection problem is to find the $k$-th smallest interpoint distance among all pairs of points of $P$. The previously best deterministic algorithm solves the problem in $O(n^{4/3} \log^2 n)$ time [Katz and Sharir, SIAM J. Comput. 1997 and SoCG 1993]. In this paper, we improve their algorithm to $O(n^{4/3} \log n)$ time. Using similar techniques, we also give improved algorithms on both the two-sided and the one-sided discrete Fr\'{e}chet distance with shortcuts problem for two point sets in the plane. For the two-sided problem (resp., one-sided problem), we improve the previous work [Avraham, Filtser, Kaplan, Katz, and Sharir, ACM Trans. Algorithms 2015 and SoCG 2014] by a factor of roughly $\log^2(m+n)$ (resp., $(m+n)^{\epsilon}$), where $m$ and $n$ are the sizes of the two input point sets, respectively. Other problems whose solutions can be improved by our techniques include the reverse shortest path problems for unit-disk graphs. Our techniques are quite general and we believe they will find many other applications in future.
\end{abstract}

\section{Introduction}
\label{sec:Introduction}

In this paper, we propose new techniques for solving geometric optimization problems involving interpoint distances in a point set in the plane. More specifically, the optimal objective value of these problems is equal to the (Euclidean) distance of two points in the set. Our techniques usually yield improvements over the previous work by at least a logarithmic factor (and sometimes a polynomial factor).

The first problem we consider is the \emph{distance selection} problem: Given a set $P$ of $n$ points in the plane and an integer $1 \leq k \leq \binom{n}{2}$, the problem asks for the $k$-th smallest interpoint distance among all pairs of points of $P$.
The problem can be easily solved in $O(n^2)$ time.
The first subquadratic time algorithm was given by Chazelle~\cite{ref:ChazelleNe85}; the algorithm runs in $O(n^{9/5} \log^{4/5} n)$ time and is based on Yao's technique~\cite{ref:YaoOn82}. 
Later, Agarwal, Aronov, Sharir, and Suri~\cite{ref:AgarwalSe93} gave a better algorithm of $O(n^{3/2} \log^{5/2} n)$ time and subsequently Goodrich~\cite{ref:GoodrichGe93} solved the problem in $O(n^{4/3}\log^{8/3}n)$ time. Katz and Sharir~\cite{ref:KatzAn97} finally presented an $O(n^{4/3}\log^2 n)$ time algorithm. All above are deterministic algorithms. Several randomized algorithms have also been proposed for the problem. The randomized algorithm of \cite{ref:AgarwalSe93} runs in $O(n^{4/3} \log^{8/3} n)$ expected time. Matousek~\cite{ref:MatousekRa91} gave another randomized algorithm of $O(n^{4/3} \log^{2/3} n)$ expected time. Very recently, Chan and Zheng proposed a randomized algorithm of $O(n^{4/3})$ expected time (see the arXiv version of~\cite{ref:ChanHo22}). Also, the time complexity can be made as a function of $k$. In particular, Chan's randomized techniques~\cite{ref:ChanOn01} solved the problem in $O(n\log n+n^{2/3}k^{1/3}\log^{5/3}n)$ expected time and Wang~\cite{ref:WangUn23} recently improved the algorithm to $O(n\log n+n^{2/3}k^{1/3}\log n)$ expected time; these algorithms are particularly interesting when $k$ is relatively small.

In this paper, we present a new deterministic algorithm that solves the distance selection problem in $O(n^{4/3}\log n)$ time. Albeit slower than the randomized algorithm of Chan and Zheng~\cite{ref:ChanHo22}, our algorithm is the first progress on the deterministic solution since the work of Katz and Sharir~\cite{ref:KatzAn97} published 25 years ago (30 years if we consider their conference version in SoCG 1993).

One technique we introduce is an algorithm for solving the following {\em partial batched range searching problem}. 
\begin{problem}
    \label{prob:BRS}
    {\em \bf (Partial batched range searching)}
    Given a set $A$ of $m$ points and a set $B$ of $n$ points in the plane and an interval $(\alpha, \beta]$, one needs to construct two collections of edge-disjoint complete bipartite graphs $\Gamma(A, B, \alpha, \beta) = \{A_t \times B_t\ |\ A_t \subseteq A, B_t \subseteq B\}$ and $\Pi(A, B, \alpha, \beta) = \{A'_s \times B'_s\ |\ A'_s \subseteq A, B'_s \subseteq B\}$ such that the following two conditions are satisfied (see Fig.~\ref{fig:PBRS} for an example):
    \begin{enumerate}
        \item For each pair $(a, b) \in A_t \times B_t\in \Gamma(A, B, \alpha, \beta)$, the (Euclidean) distance $\lVert ab \rVert$ between points $a \in A_t$ and $b \in B_t$ is in $(\alpha, \beta]$.
        \item For any two points $a \in A$ and $b \in B$ with $\lVert ab \rVert \in (\alpha, \beta]$, either $\Gamma(A, B, \alpha, \beta)$ has a unique graph  $A_t \times B_t$ that contains $(a, b)$ or $\Pi(A, B, \alpha, \beta)$ has a unique graph  $A'_s \times B'_s$ that contains $(a, b)$.
    \end{enumerate}
In other words, the two collections $\Gamma$ and $\Pi$ together record all pairs $(a,b)$ of points $a\in A$ and $b\in B$ whose distances are in $(\alpha,\beta]$. While all pairs of points recorded in $\Gamma$ have their distances in $(\alpha,\beta]$, this may not be true for $\Pi$. For this reason, we sometimes call the point pairs recorded in $\Pi$ {\em uncertain pairs}. 
\end{problem}

\begin{figure}
    \centering
    \includegraphics[height=1.8in]{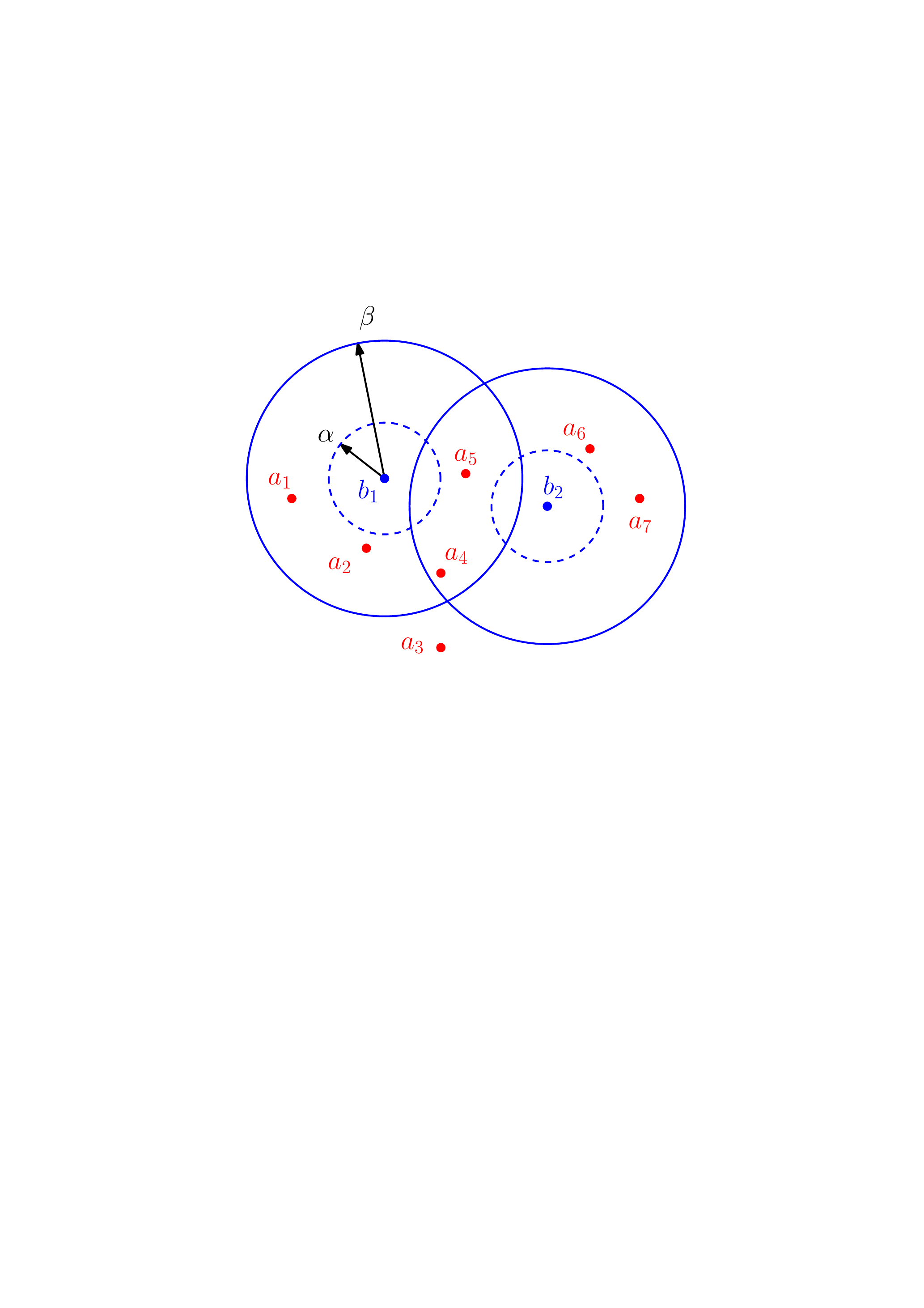}
    \caption{An example of Problem~\ref{prob:BRS} with $\Gamma = \{ \{a_1, a_2\} \times \{b_1\}, \{a_4, a_5\} \times \{b_1, b_2\}\}$ and $\Pi = \{\{a_3, a_6, a_7\} \times \{b_2\}\}$. Note that the uncertain pair $(a_3, b_2)$ has a distance $\lVert a_3 b_2 \rVert \notin (\alpha, \beta]$.} 
    \label{fig:PBRS}
\end{figure}

Note that if context is clear, we sometimes use $\Gamma$ and $\Pi$ to refer to $\Gamma(A, B, \alpha, \beta)$ and $\Pi(A, B, \alpha, \beta)$, respectively. Also, for short, we use BRS to refer to batched range searching. 


In the traditional BRS, which has been studied with many applications, e.g.,\cite{ref:WangRe23,ref:KatzEf21arXiv, ref:AvrahamTh15}, the collection $\Pi$ is $\emptyset$ (and thus $\Gamma$ itself satisfies the two conditions in Problem~\ref{prob:BRS}); for differentiation, we refer to this case as the {\em complete BRS}.
The advantage of the partial problem over the complete problem is that the partial problem can usually be solved faster, with a sacrifice that some uncertain pairs (i.e., those recorded in $\Pi$) are left unresolved. As will be seen later, in typical applications the number of those uncertain pairs can be made small enough so that they can be handled easily without affecting the overall runtime of the algorithm. More specifically, we derive an algorithm to compute a solution for the partial BRS, whose runtime is controlled by a parameter (roughly speaking, the runtime increases as the graph sizes of $\Pi$ decreases). Previously, Katz and Sharir~\cite{ref:KatzAn97} gave an algorithm for the complete problem. Our solution, albeit for the more general partial problem, even improves their algorithm by roughly a logarithmic factor when applied to the complete case.

On the one hand, our partial BRS solution helps achieve our new  result for the distance selection problem. On the other hand, combining some techniques for the latter problem, we propose a general algorithmic framework that can be used to solve any geometric optimization problem that involves interpoint distances of a set of points in the plane.
Consider such a problem whose optimal objective value (denoted by $\delta^*$) is equal to the distance of two points of a set $P$ of $n$ points in the plane. Assume that the decision problem (i.e., given $\delta$, decide whether $\delta\geq \delta^*$) can be solved in $T_D$ time. A straightforward algorithm for computing $\delta^*$ is to use the distance selection algorithm and the decision algorithm to perform binary search on interpoint distances of all pairs of points of $P$; the algorithm runs in $O(\log n)$ iterations and each iteration takes $O(n^{4/3} \log n + T_D)$ time (if we use our new distance selection algorithm). As such, the total runtime is $O((n^{4/3} \log n + T_D) \log n)$. Using our new framework, the runtime can be bounded by $O((n^{4/3} + T_D) \log n)$, which is faster when $T_D=o(n^{4/3}\log n)$.

\paragraph{Two-sided DFD.}
One application of this new framework is the \emph{two-sided discrete Fr\'{e}chet distance with shortcuts} problem, or {\em two-sided DFD} for short. Fr\'{e}chet distance is used to measure the similarity between two curves and many of its variations have been studied, e.g., \cite{ref:AgarwalCo14-DFD,ref:AltCo95,ref:AvrahamTh15,ref:BuchinEx09, ref:BuchinCo14,ref:DriemelJa13}. To reduce the impact of outliers between two (sampled) curves, discrete Fr\'{e}chet distance with shortcuts was proposed~\cite{ref:AvrahamTh15,ref:DriemelJa13}. If outliers of only one curve need to be taken care of, it is called {\em one-sided DFD}; otherwise it is {\em two-sided DFD}.
Avraham, Filtser, Kaplan, Katz, and Sharir~\cite{ref:AvrahamTh15} solved the two-sided DFD in $O((m^{2/3} n^{2/3} + m + n) \log^3 (m + n))$, where $m$ and $n$ are the numbers of vertices of the two input curves, respectively. Using our new framework, we improve their algorithm to $O((m^{2/3} n^{2/3} \cdot 2^{O(\log^* (m + n))} + m \log n + n \log m) \log (m + n))$ time, an improvement of roughly $O(\log^2 (m+n))$.

\paragraph{One-sided DFD.}
For the one-sided DFD, the authors of \cite{ref:AvrahamTh15} gave a randomized algorithm of $O((m + n)^{6/5 + \epsilon})$ expected time, for any constant $\epsilon > 0$. Using our solution to the partial BRS, we improve their algorithm to $O((m + n)^{6/5} \log^{7/5} (m + n))$ expected time. Combining the interval shrinking and the bifurcation tree techniques~\cite{ref:AvrahamTh15,ref:KaplanTh23}, our partial BRS results lead to an algorithmic framework for solving geometric optimization problems that involve interpoint distances in a point set in the plane.
Consider such a problem whose optimal objective value (denoted by $\delta^*$) is equal to the distance of two points of a set $P$ of $n$ points in the plane. The framework has two main procedures. The first procedure is to compute an interval that contains $\delta^*$ and with high probability at most $L$ interpoint distances of $P$. Using the interval and a \emph{bifurcation tree} technique, the second main procedure finally computes $\delta^*$.
Assuming that the decision problem can be solved in $T_D$ time, the first main procedure takes $O(n^{4/3} / L^{1/3} \cdot \log^2 n + T_D \cdot \log n \cdot \log \log n)$ expected time. Assuming that the decision problem can be solved in $T^*_D$ time by a special algorithm in which all critical values are interpoint distances (see Section~\ref{sec:OS-DFD} for more details), 
the second main procedure runs in $O(\sqrt{L\cdot T_D\cdot T^*_D\cdot \log  n})$ expected time. As such, the total expected time of the entire algorithm is $O(n^{4/3} / L^{1/3} \cdot \log^2 n + T_D \cdot \log n \cdot \log \log n + \sqrt{L\cdot T_D\cdot T^*_D\cdot \log  n})$. Our result for the one-sided DFD is a direct application of this framework. More specifically, since both $T_D,T_D^*=O(m+n)$~\cite{ref:AvrahamTh15}, we set $L=(m+n)^{2/5}\log^{9/5}(m+n)$ and replace $n$ by $(m+n)$ in the above time complexity as there are two parameters $m$ and $n$ in the problem.

\paragraph{Reverse shortest paths in unit-disk graphs.}
We demonstrate two more applications of the framework where our new techniques lead to improved results over the previous work: the \emph{reverse shortest paths in unit-disk graphs} and its weighted case.
Given a set $P$ of $n$ points in the plane and a parameter $\delta > 0$, the unit-disk graph $G_{\delta}(P)$ is an undirected graph whose vertex set is $P$ such that an edge connects two points $p, q \in P$ if the (Euclidean) distance between $p$ and $q$ is at most $\delta$. In the unweighted (resp., weighted) case, the weight of each edge is equal to $1$ (resp., the distance between the two vertices). Given set $P$, two points $s, t \in P$, and a parameter $\lambda$, the problem is to compute the smallest $\delta^*$ such that the shortest path length between $s$ and $t$ in $G_{\delta^*}(P)$ is at most $\lambda$.

Deterministic algorithms of $O(n^{5/4} \log^{7/4}n)$ and $O(n^{5/4} \log^{5/2}n)$ times are known for the unweighted and weighted problems, respectively~\cite{ref:WangRe23}. Kaplan, Katz, Saban, and Sharir~\cite{ref:KaplanTh23} recently gave randomized algorithms that solve them in $O^*(n^{6/5})$ expected time, where the notation $O^*$ hides a subpolynomial factor. Our new framework leads to new randomized algorithms of $O(n^{6/5} \log^{8/5} n)$ and $O(n^{6/5} \log^{11/5} n)$ expected time, respectively. Comparing to \cite{ref:KaplanTh23}, our effort is to make the subpolynomial factor small.

\paragraph{Recap.}
In summary, we propose two algorithmic frameworks for solving geometric optimization problems that involve interpoint distances in a set of points in the plane. The first fra1mework is deterministic while the second one is randomized. The first framework is mainly useful when the decision algorithm time $T_D$ is relatively large (e.g., close to $O(n^{4/3})$) while the second one is more interesting when it is possible to make $T^*_D$ relatively small (e.g., near linear). Both frameworks rely on our solution to the partial BRS problem. As optimization problems involving interpoint distances are very common in computational geometry, we believe our techniques will find more applications in future.

\paragraph{Outline.} The rest of the paper is organized as follows.
Section~\ref{sec:BRS} presents our algorithm for the partial BRS. The algorithm for the distance selection problem is described in Section~\ref{sec:DistanceSelection}. The two-sided DFD problem is solved in Section~\ref{sec:TS-DFD}, where we also propose our first algorithmic framework. The one-sided DFD problem and our second algorithmic framework as well as the reverse shortest path problem are discussed in Section~\ref{sec:OS-DFD}.

\section{Partial batched range searching}
\label{sec:BRS}

In this section, we present our solution to the partial BRS problem, i.e., Problem~\ref{prob:BRS}. We follow the notation in the statement of Problem~\ref{prob:BRS}. In particular, $m=|A|$ and $n=|B|$.

For any set $P$ of points and a compact region $R$ in the plane, let $P(R)$ denote the subset of points of $P$ in $R$, i.e., $P(R)=P\cap R$. 
For any point $p$ in the plane, with respect to the interval $(\alpha, \beta]$ in Problem~\ref{prob:BRS}, let $D_p$ denote the annulus centered at $p$ and having radii $\alpha$ and $\beta$ (e.g., see Fig.~\ref{fig:annuli}); so $D_p$ has an inner boundary circle of radius $\alpha$ and an outer boundary circle of radius $\beta$. We assume that $D_p$ includes its outer boundary circle but not its inner boundary circle. In this way, a point $q$ is in $D_p$ if and only if $\lVert pq \rVert\in (\alpha,\beta]$. 
Define $\calD$ as the set of all annuli $D_p$ for all points $p\in A$. Define $\calC$ to be the set of boundary circles of all annuli of $\calD$. Hence, $\calC$ consists of $2m$ circles. For any compact region $R$ in the plane, let $\calC_R$ denote the subset of circles of $\calC$ that intersect the relative interior of $R$.

An important tool we use is the cuttings~\cite{ref:ChazelleCu93}. For a parameter $1 \leq r \leq n$, a $(1/r)$-cutting $\Xi$ of size $O(r^2)$ for $\calC$ is a collection of $O(r^2)$ constant-complexity cells whose union covers the plane such that the interior of each cell $\sigma\in \Xi$ is intersected by at most $m / r$ circles in $\calC$, i.e., $|\calC_{\sigma}| \leq m/r$.

We actually use \emph{hierarchical cuttings}~\cite{ref:ChazelleCu93}. We say that a cutting $\Xi'$ \emph{$c$-refines} a cutting $\Xi$ if each cell of $\Xi'$ is contained in a single cell of $\Xi$ and every cell of $\Xi$ contains at most $c$ cells of $\Xi'$. Let $\Xi_0$ denote the cutting whose single cell is the whole plane. Then we define cuttings $\{\Xi_0, \Xi_1, ..., \Xi_k\}$, in which each $\Xi_i$, $1 \leq i \leq k$, is a $(1/\rho^i)$-cutting of size $O(\rho^{2i})$ that $c$-refines $\Xi_{i - 1}$, for two constants $\rho$ and $c$. By setting $k = \lceil \log_\rho r \rceil$, the last cutting $\Xi_k$ is a $(1/r)$-cutting. The sequence $\{\Xi_0, \Xi_1, ..., \Xi_k\}$ of cuttings is called a hierarchical $(1/r)$-cutting of $\calC$. For a cell $\sigma'$ of $\Xi_{i - 1}$, $1 \leq i \leq k$, that fully contains cell $\sigma$ of $\Xi_i$, we say that $\sigma'$ is the \emph{parent} of $\sigma$ and $\sigma$ is a \emph{child} of $\sigma'$. Thus the hierarchical $(1/r)$-cutting can be viewed as a tree structure with $\Xi_0$ as the root.

A hierarchical $(1/r)$-cutting of $\calC$ can be computed in $O(mr)$ time, e.g., by the algorithm in \cite{ref:WangUn23}, which adapts Chazelle's algorithm~\cite{ref:ChazelleCu93} for hyperplanes. The algorithm also produces the subset $\calC_\sigma$ for all cells $\sigma\in \Xi_i$ for all $i=0,1,\ldots,k$, implying that the total size of these subsets is bounded by $O(mr)$. In particular, each cell of the cutting produced by the algorithm of \cite{ref:WangUn23} is a {\em pseudo-trapezoid} that is bounded by two vertical line segments from left and right, an arc of a circle of $\calC$ from top, and an arc of a circle of $\calC$ from bottom (e.g., see Fig.~\ref{fig:pseudotrap}).

\begin{figure}[t]
\begin{minipage}[t]{0.48\textwidth}
\begin{center}
\includegraphics[height=1.3in]{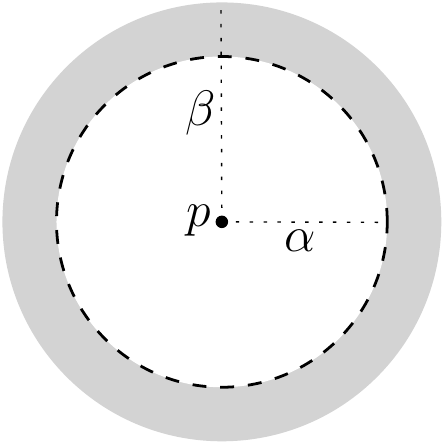}
\caption{\footnotesize Illustrating an annulus $D_p$ (the grey region).}
\label{fig:annuli}
\end{center}
\end{minipage}
\begin{minipage}[t]{0.48\textwidth}
\begin{center}
\includegraphics[height=1.2in]{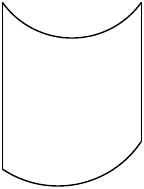}
\caption{\footnotesize Illustrating a pseudo-trapezoid.}
\label{fig:pseudotrap}
\end{center}
\end{minipage}
\end{figure}

Using cuttings, we obtain the following solution to the partial BRS problem. 
\begin{lemma}
    \label{lem:10}
    For any $r$ with $1\leq r\leq \min\{m^{1/3},n^{1/3}\}$, we can compute in $O(mr \log r + nr)$ time two collections $\Gamma(A, B, \alpha, \beta)= \{A_t \times B_t\ |\ A_t \subseteq A, B_t \subseteq B\}$ 
     and $\Pi(A, B, \alpha, \beta) = \{A'_s \times B'_s\ |\ A'_s \subseteq A, B'_s \subseteq B\}$ of edge-disjoint complete bipartite graphs that satisfy the conditions of Problem~\ref{prob:BRS},
     with the following complexities: (1) $|\Gamma|=O(r^4)$; (2) $\sum_t |A_t|, \sum_t |B_t| = O(mr \log r + nr)$; (3) $|\Pi|=O(r^4)$; (4) $|A_s'| = O(m/r^3)$ and $|B_s'| = O(n/r^3)$ for each $A_s' \times B_s' \in \Pi$; (5) the number of pairs of points recorded in $\Pi$ is $O(r^4\cdot m/r^3\cdot n/r^3)=O(mn/r^2)$.
\end{lemma}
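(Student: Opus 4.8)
The natural approach is to use the hierarchical $(1/r)$-cutting $\{\Xi_0, \Xi_1, \ldots, \Xi_k\}$ of the circle set $\calC$ (of size $2m$), computed in $O(mr)$ time, and to process the point set $B$ through this hierarchy. The plan is: first, distribute the points of $B$ down the hierarchy, so that each point $b \in B$ is located in a cell $\sigma$ of each $\Xi_i$; this takes $O(nr)$ time since the hierarchy has $O(\log_\rho r)$ levels and point location within a single level costs $O(1)$ per point when moving from parent to child (there are at most $c$ children). Second, for each cell $\sigma$ at each level $i$, and for each circle $C \in \calC \setminus \calC_\sigma$ that bounds an annulus $D_p$ and that has $\sigma$ entirely inside $D_p$ (equivalently, $\sigma$ entirely inside the disk of radius $\beta$ about $p$ and entirely outside the disk of radius $\alpha$ about $p$), we want to record the complete bipartite graph $\{p\} \times B(\sigma)$ — but we must be careful to charge each such $(p, b)$ pair exactly once and to keep the graphs edge-disjoint. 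The standard trick (as in Katz–Sharir and Chazelle-type range searching) is: a cell $\sigma$ at level $i$ is associated with a circle $C$ if $C \in \calC_{\sigma'}$ for $\sigma'$ the parent of $\sigma$ (so $C$ "was relevant" one level up) but $C \notin \calC_\sigma$; then the disk bounded by $C$ either fully contains $\sigma$ or is disjoint from it, and this gives a clean partition of the "good" pairs into $O(r^4)$ bipartite graphs — one must group the points $p$ whose annuli behave the same way with respect to $\sigma$, of which there are at most $|\calC_{\sigma'}| \le c \cdot m/\rho^{i-1}$.

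**Counting $\Gamma$ and $\Pi$.**

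For the deepest level $k$ (the actual $(1/r)$-cutting), each cell $\sigma$ has $|\calC_\sigma| \le m/r$ circles still crossing it and $|B(\sigma)|$ points of $B$ inside it. Summing $|B(\sigma)|$ over the $O(r^2)$ cells of $\Xi_k$ gives at most $n$ (plus a constant-factor overhead from points on boundaries), and the leftover work inside each deepest cell is where the uncertainty comes from. Here is where the parameter constraint $r \le \min\{m^{1/3}, n^{1/3}\}$ enters: we want to recurse or sub-partition each deepest cell so that the residual sub-cells each see only $O(m/r^3)$ circles and $O(n/r^3)$ points — i.e., apply the cutting machinery roughly "three times" or equivalently build a $(1/r^3)$-cutting restricted appropriately, yielding $O(r^6)$ leaf sub-cells globally, but only $O(r^4)$ of them are "charged" with uncertain pairs in the right way, giving $|\Pi| = O(r^4)$ with each graph of size $O(m/r^3) \times O(n/r^3)$. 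I would verify: (i) $|\Gamma| = O(r^4)$ by summing, over all levels $i$ and all cells $\sigma \in \Xi_i$, the number of distinct "annulus-behavior classes" which is $O(1 + |\calC_{\sigma'}|/(m/\rho^i))$ — wait, more carefully, the number of graphs contributed per cell is $O(1)$ amortized against the cutting size, so $\sum_i O(\rho^{2i}) = O(r^2)$ for the main hierarchy, and the extra $r^2$ factor (to reach $r^4$) comes from the finer sub-partition inside deepest cells; (ii) the weight bounds $\sum_t |A_t|, \sum_t |B_t| = O(mr\log r + nr)$ — the $B$-side sum telescopes to $O(nr)$ (each point of $B$ appears once per level, weighted, plus the sub-partition contributes another $O(nr)$), while the $A$-side sum is $\sum_i \sum_{\sigma \in \Xi_i} |\calC_{\sigma'}| = O(mr)$ per "pass" and the $\log r$ factor appears because we must sort or merge the annulus-behavior classes at each cell, or because of the sub-partition cost; (iii) claim (5) is immediate arithmetic from (3) and (4).

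**The main obstacle.**

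The crux — and the part I expect to be delicate — is the simultaneous bookkeeping that makes the families edge-disjoint \emph{and} achieves the asymmetric size bounds $O(m/r^3)$ versus $O(n/r^3)$ for the $\Pi$-graphs while keeping $|\Gamma|, |\Pi| = O(r^4)$ rather than $O(r^6)$. One must argue that although a $(1/r^3)$-cutting-type refinement naively produces $\Theta(r^6)$ cells, only $O(r^4)$ of them actually carry uncertain pairs, because a pair $(p,b)$ can be "uncertain at level-$k$ resolution" only if the circle $C_p$ still crosses both the deepest main cell containing $b$ \emph{and} survives into the sub-partition — and there are only $m/r$ such circles per deepest cell, spread over $O(r^2)$ deepest cells, so the total circle-cell incidence at the relevant sub-level is $O(r^2 \cdot (m/r) \cdot r) / (m/r^3)$-type counting that collapses to $O(r^4)$; pinning down this incidence argument precisely, and confirming the $\log r$ (and not $\log^2 r$ or worse) in the $A$-side weight, is where the real work lies. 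The $O(mr\log r)$ term specifically, I expect, comes from needing to sort the $O(m/\rho^i)$ circles relevant to each cell by the geometric predicate "does this annulus contain the cell," summed as $\sum_i \rho^{2i} \cdot (m/\rho^i)\log(m/\rho^i) = O(mr \log r)$ after the geometric series.
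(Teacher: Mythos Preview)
Your outline captures the first half of the argument correctly---build a hierarchical $(1/r)$-cutting on the annulus boundaries $\calC$ (centers in $A$), push $B$ down the levels, and at each cell $\sigma$ form the graph $A_\sigma\times B(\sigma)$ where $A_\sigma$ are the points whose annuli contain $\sigma$ but not its parent. This matches the paper and yields $O(r^2)$ graphs in $\Gamma$ together with $O(r^2)$ leftover subproblems of size $(m/r,\,n/r^2)$ (after splitting heavy $B(\sigma)$'s into standard pieces of size $\Theta(n/r^2)$).

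The gap is in what you do next. You propose to push to a $(1/r^3)$-cutting on the same circle set $\calC$ and then argue by incidence counting that only $O(r^4)$ of the resulting $O(r^6)$ cells carry uncertain pairs of the right size. This does not work: a deeper cutting on the $A$-side controls only the number of $A$-circles crossing a cell; it says nothing about how many $B$-points lie in a cell. You cannot obtain $|B'_s|=O(n/r^3)$ this way, and if you instead chop heavy cells into $B$-pieces of size $n/r^3$, the number of pieces is $\sum_\sigma \lceil |B(\sigma)|r^3/n\rceil$, which is $\Theta(r^6)$ in the worst case, not $O(r^4)$.

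The missing idea is a \emph{primal--dual alternation}: after the primal step leaves $O(r^2)$ subproblems of size $(m/r,\,n/r^2)$, the paper swaps the roles of $A$ and $B$ and builds a hierarchical $(1/r)$-cutting on the circles centered at the $B$-points within each subproblem. This dual step reduces each subproblem to $O(r^2)$ sub-subproblems of size $(m/r^3,\,n/r^3)$, giving exactly $O(r^4)$ graphs in $\Pi$ with the stated per-graph bounds. Composing the two recurrences $T(m,n)=O(mr+n\log r)+O(r^2)\,T(m/r,n/r^2)$ and $T(m,n)=O(nr+m\log r)+O(r^2)\,T(m/r^2,n/r)$ yields $T(m,n)=O(mr\log r+nr)+O(r^4)\,T(m/r^3,n/r^3)$.

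A smaller point: the $\log r$ factor in $mr\log r$ is not from sorting. In the dual step, each of the $O(r^2)$ subproblems has $m/r$ points on the $A$-side, and locating them through the $O(\log r)$ levels of the dual hierarchy costs $O((m/r)\log r)$ per subproblem, i.e., $O(mr\log r)$ total. Similarly, your claim that pushing $B$ through the primal hierarchy costs $O(nr)$ should read $O(n\log r)$; the $O(nr)$ term arises only after summing the dual contributions.
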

\begin{proof}
    \label{proof:lemma-UTBRS}
    We begin with constructing a hierarchical $(1/r)$-cutting $\{\Xi_0, \Xi_1, ..., \Xi_k\}$ for $\calC$, which takes $O(mr)$ time as discussed above. We use $\Xi$ to refer to the set of all cells $\sigma$ in all cuttings $\Xi_i$, $0\leq i\leq k$. 
    Next we compute the set $B(\sigma)$ for each cell $\sigma$ in the cutting (recall that $B(\sigma)$ refers to the subset of points of $B$ inside $\sigma$; we call $B(\sigma)$ a {\em canonical subset}). This can be done in $O(n\log r)$ time in a top-down manner by processing each point of $B$ individually. Specifically, for each point $p \in B$, suppose we know that $p$ is in $\sigma'$ for a cell $\sigma'$ in $\Xi_{i-1}$ (which is true initially when $i=1$ as $\Xi_0$ has a single cell that is the entire plane). By examining each child of $\sigma'$ we can find in $O(1)$ time the cell $\sigma$ of $\Xi_i$ that contains $p$ and then we add $p$ to $B(\sigma)$. Since $k=\Theta(\log r)$, each point of $B$ is stored in $O(\log r)$ canonical subsets and the total size of all canonical subsets $B(\sigma)$ for all cells $\sigma\in \Xi$ is $O(n \log r)$. 

    Next, for each cell $\sigma$ of $\Xi$, we compute another canonical subset $A_{\sigma}\subseteq A$. Specifically, a point $p\in A$ is in $A_{\sigma}$ if the annulus $D_p$ contains $\sigma$ but not $\sigma$'s parent. The subsets $A_{\sigma}$ for all cells $\sigma$ of $\Xi$ can be computed in $O(mr)$ time. Indeed, recall that the cutting algorithm already computes $\calC_{\sigma}$ for all cells $\sigma\in \Xi$. For each $\Xi_{i-1}$, $1\leq i\leq k$, for each cell $\sigma'$ of $\Xi_{i-1}$, we consider each circle $C\in \calC_{\sigma'}$. Let $p$ be the point of $A$ such that $C$ is a bounding circle of the annulus $D_p$. For each child $\sigma$ of $\sigma'$, if $D_p$ fully contains $\sigma$, then we add $p$ to $A_{\sigma}$. In this way, $A_{\sigma}$ for all cells $\sigma$ of $\Xi$ can be computed in $O(mr)$ time since $\sum_{0\leq i\leq k} \sum_{\sigma'\in \Xi_i} |\calC_{\sigma'}|=O(mr)$ and each cell $\sigma'$ has $O(1)$ children. As such, the total size of $A_{\sigma}$ for all cells $\sigma\in \Xi$ is $O(mr)$. 
    
    By definition, for each cell $\sigma\in \Xi$, for any point $a\in A_{\sigma}$ and any point $b\in B(\sigma)$, we have $\lVert ab\rVert\in (\alpha,\beta]$. As such, we return $\{A_{\sigma}\times B(\sigma)\ |\ \sigma\in \Xi\}$ as a subcollection of $\Gamma(A,B,\alpha,\beta)$ to be computed for the lemma. Note that the complete bipartite graphs of $\{A_{\sigma}\times B(\sigma)\ |\ \sigma\in \Xi\}$ are edge-disjoint. 
    The size of the subcollection is equal to the number of cells of the hierarchical cutting, which is $O(r^2)$. Also, we have shown above that $\sum_{\sigma\in \Xi}|A_{\sigma}|=O(mr)$ and $\sum_{\sigma\in \Xi}|B(\sigma)|=O(n\log r)$. 

    For each cell $\sigma$ of the last cutting $\Xi_k$, we have $|\calC_{\sigma}| \leq m / r$. Let $\hat{A}_{\sigma}$ denote the subset of points $p\in A$ such that $D_p$ has a bounding circle in $\calC_{\sigma}$.  
    We do not know whether distances between points of $\hat{A}_{\sigma}$ and points of $B(\sigma)$ are in $(\alpha, \beta]$ or not. If $|B(\sigma)|>n / r^2$, then we arbitrarily partition $B(\sigma)$ into subsets of size between $n / (2r^2)$ and $n / r^2$. We call these subsets \emph{standard subsets} of $B(\sigma)$. Since $|B| = n$ and we have $O(r^2)$ cells in cutting $\Xi_k$, the number of standard subsets of all cells of $\Xi_k$ is $O(r^2)$. For each standard subset $\hat{B}(\sigma)\subseteq B(\sigma)$, we form a pair $(\hat{A}_{\sigma}, \hat{B}(\sigma))$ as an ``unsolved'' {\em subproblem}. 
      Then we have $O(r^2)$ subproblems. 
    Note that $|\hat{A}_{\sigma}|\leq m/r$ and $|\hat{B}(\sigma)|\leq n/r^2$. 
  If we apply the same algorithm recursively on each subproblem, then we have the following recurrence relation (which holds for any $1\leq r\leq m$):
    \begin{gather}
        \label{formula:PrimalRecurrence}
        T(m, n) = O(mr + n \log r) + O(r^2) \cdot T(\frac{m}{r}, \frac{n}{r^2})
    \end{gather}
    
    Note that if we use $T(m,n)$ to represent the total size of $A_t$ and $B_t$ of all complete bipartite graphs $A_t\times B_t$ in the subcollection of $\Gamma(A,B,\alpha,\beta)$ that have been produced as above, then we have the same recurrence as above. If $N(m,n)$ denotes the number of these graphs, then we have the following recurrence: 
     \begin{gather*}
        \label{formula:PrimalRecurrenceN}
        N(m, n) = O(r^2) + O(r^2) \cdot N(\frac{m}{r}, \frac{n}{r^2})
    \end{gather*}

    We now solve the problem in a ``dual'' setting by switching the roles of $A$ and $B$, i.e., define annuli centered at points of $B$ and compute the hierarchical cutting for their bounding circles. Then, symmetrically we have the following recurrences (which holds for any $1\leq r\leq n$):
    \begin{gather}
        \label{formula:DualRecurrence}
        T(m, n) = O(nr + m \log r) + O(r^2) \cdot T(\frac{m}{r^2},\frac{n}{r})
    \end{gather}
     \begin{gather*}
        \label{formula:DualRecurrenceN}
        N(m, n) = O(r^2) + O(r^2) \cdot N(\frac{m}{r^2},\frac{n}{r})
    \end{gather*}

    By applying \eqref{formula:DualRecurrence} to each subproblem of \eqref{formula:PrimalRecurrence} using the same parameter $r$ and we can obtain the following recurrence:
    \begin{gather*}
        \label{formula:balanced}
        T(m, n) = O(mr \log r + nr) + O(r^4) \cdot T(\frac{m}{r^3}, \frac{n}{r^3})
    \end{gather*}
    Similarly, we have
     \begin{gather*}
        \label{formula:balancedN}
        N(m, n) = O(r^4) + O(r^4) \cdot N(\frac{m}{r^3}, \frac{n}{r^3})
    \end{gather*}

    The above recurrences tell us that in $O(mr\log r + nr)$ time we can compute a collection of $O(r^4)$ edge-disjoint complete bipartite graphs $A_t\times B_t$ with $A_t\subseteq A$ and $B_t\subseteq B$ such that for any two points $a\in A_t$ and $b\in B_t$ their distance $\lVert ab\rVert$ lies in $(\alpha,\beta]$. 
    Further, the size of all such $A_t$'s and $B_t$'s is bounded by $O(mr\log r + nr)$. 
    We return the above collection as $\Gamma(A,B,\alpha,\beta)$ for the lemma. 
    
    In addition, we have also $O(r^4)$ graphs $A'_s\times B'_s$ with $A'_s\subseteq A$ and $B'_s\subseteq B$ corresponding to the unsolved subproblems $T(m/r^3,n/r^3)$ and we do not know whether $\lVert ab\rVert\in (\alpha,\beta]$ for points $a\in A'_s$ and $b\in B'_s$. We return the collection of all such graphs as $\Pi(A,B,\alpha,\beta)$ for the lemma. Hence, $|\Pi(A,B,\alpha,\beta)|=O(r^4)$, and $|A_s'|\leq m/r^3$ and $|B_s'|\leq n/r^3$ for each graph $A_s'\times B_s'$ in the collection. The number of pairs of points recorded in $\Pi(A,B,\alpha,\beta)$ is  $O(|\Pi(A,B,\alpha,\beta)|\cdot m/r^3\cdot n/r^3)$, which is $O(mn/r^2)$.
    This proves the lemma.
\end{proof}

The following theorem solves the complete BRS problem by running the algorithm of Lemma~\ref{lem:10} recursively until the problem size becomes $O(1)$. 

\begin{theorem}
    \label{theorem:PBRS}
     We can compute in $O(m^{2/3} n^{2/3} \cdot 2^{O(\log^* (m + n))} + m \log n + n \log m)$ time a collection $\Gamma(A, B, \alpha, \beta)=\{A_t \times B_t\ |\ A_t \subseteq A, B_t \subseteq B\}$ of edge-disjoint complete bipartite graphs that satisfy the conditions of Problem~\ref{prob:BRS} (with $\Pi(A, B, \alpha, \beta)=\emptyset$), with the following complexities: (1) $|\Gamma|=O(m^{2/3} n^{2/3} \cdot \log^* (m + n)+m+n)$; (2) $\sum_t |A_t|, \sum_t |B_t| = O(m^{2/3} n^{2/3} \cdot 2^{O(\log^* (m + n))} + m \log n + n \log m)$.
\end{theorem}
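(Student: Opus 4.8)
The plan is to prove the theorem by recursing on the algorithm of Lemma~\ref{lem:10}. Given the instance $(A,B,\alpha,\beta)$ with $|A|=m$ and $|B|=n$, I would apply Lemma~\ref{lem:10} with a carefully chosen parameter $r$, keep the collection $\Gamma$ it outputs, and recurse on each uncertain subproblem $A'_s\times B'_s\in\Pi$; the recursion bottoms out when a subproblem has $O(1)$ points, which is solved in $O(1)$ time by brute force. Since each uncertain subproblem has $|A'_s|=O(m/r^3)$ and $|B'_s|=O(n/r^3)$, the instance sizes shrink geometrically and the recursion terminates; the final output $\Gamma(A,B,\alpha,\beta)$ is the union of all the $\Gamma$'s produced over the whole recursion tree, and $\Pi(A,B,\alpha,\beta)=\emptyset$.

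Correctness is straightforward. Every output graph records only in-range pairs: each intermediate $\Gamma$ does so by condition~1 of Problem~\ref{prob:BRS}, and the $O(1)$-size base cases are solved exactly; this gives condition~1. And every in-range pair $(a,b)$ is recorded exactly once, because at each level condition~2 of Problem~\ref{prob:BRS} places $(a,b)$ either into a unique graph of the current $\Gamma$ (done) or into a unique uncertain subproblem (recursed on), and distinct uncertain subproblems own disjoint sets of in-range pairs; this also yields the edge-disjointness of the final collection.

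The substance of the proof is the complexity analysis. Writing $T$, $N$, $S$ for the running time, $|\Gamma|$, and $\sum_t|A_t|+\sum_t|B_t|$, Lemma~\ref{lem:10} gives, in the balanced regime where $m$ and $n$ are comparable,
\begin{gather*}
T(m,n)=O(mr\log r+nr)+O(r^4)\cdot T(m/r^3,n/r^3),\\
N(m,n)=O(r^4)+O(r^4)\cdot N(m/r^3,n/r^3),\qquad S(m,n)=O(mr\log r+nr)+O(r^4)\cdot S(m/r^3,n/r^3),
\end{gather*}
valid for $1\le r\le\min\{m^{1/3},n^{1/3}\}$. The crucial observation is that the target term $m^{2/3}n^{2/3}$ is scale-invariant under one recursion step, since $r^4\cdot\big((m/r^3)(n/r^3)\big)^{2/3}=m^{2/3}n^{2/3}$. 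Hence I would choose $r$ at each level as large as the constraint allows, shaved by a polylogarithmic factor (roughly $r\approx\min\{m,n\}^{1/3}/\log\min\{m,n\}$) so that the additive cost $O(mr\log r+nr)$ of that level is $O(m^{2/3}n^{2/3})$; with this choice a single level reduces the instance size from $s$ to $\mathrm{polylog}(s)$, so the recursion reaches the base case after only $O(\log^*(m+n))$ levels. Summing the $O(m^{2/3}n^{2/3})$ per-level contributions over these $O(\log^*(m+n))$ levels, and absorbing the constant-factor growth in the number of subproblems per level (which compounds to $2^{O(\log^*(m+n))}$), gives the stated bounds for $T$ and $S$; for $N$ one checks in addition that the per-level graph count decreases geometrically, so $|\Gamma|=O(m^{2/3}n^{2/3}\log^*(m+n)+m+n)$.

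The last piece is the unbalanced regime, where $m$ and $n$ differ by more than a constant factor. Here I would first apply the primal recurrence~\eqref{formula:PrimalRecurrence} when $n\gg m$ (it shrinks $n$ faster than $m$) or the dual recurrence~\eqref{formula:DualRecurrence} when $m\gg n$, each time with $r$ capped by the smaller of the two current sizes, until the sizes become comparable, and only then switch to the balanced recursion above. This preliminary phase is precisely what produces the additive $O(m\log n+n\log m)$ term, since the $\log r$ penalty that such a step attaches to one dimension is charged against the other dimension and is at most $O(\log(m+n))$. I expect this balancing to be the main obstacle: one must choose the parameters $r$ across all levels and across the primal/dual/balanced phases so that (i) the per-level geometric costs telescope to $O(m^{2/3}n^{2/3}\cdot 2^{O(\log^*(m+n))})$, (ii) the recursion depth stays $O(\log^*(m+n))$, and (iii) the logarithmic overhead appears only as the additive $m\log n+n\log m$ rather than as $(m+n)\log(m+n)$.
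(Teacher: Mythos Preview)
Your proposal is correct and follows essentially the same approach as the paper: recurse on Lemma~\ref{lem:10}, choose $r\approx n^{1/3}/\log n$ in the balanced case so that subproblem sizes drop to $\mathrm{polylog}$ and the depth is $O(\log^*)$, and use the primal/dual recurrences to handle the unbalanced case. Two small points where the paper is sharper: (i) for the asymmetric case with $m\le n<m^2$, the paper sets $r=n/m$ in a \emph{single} application of~\eqref{formula:PrimalRecurrence}, which immediately makes every subproblem symmetric of size $(m^2/n,m^2/n)$, rather than iterating ``until the sizes become comparable''; (ii) your remark that the per-level graph count ``decreases geometrically'' is not accurate---in fact $P_k\cdot s_k^{4/3}$ stays essentially constant across levels, and the $O(m^{2/3}n^{2/3}\log^*(m+n))$ bound on $|\Gamma|$ comes instead from summing $O(\log^*)$ comparable terms.
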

\begin{proof}
    To solve the complete BRS problem, the main idea is to apply the recurrence \eqref{formula:balanced} 
    recursively until the size of each subproblem becomes $O(1)$. 
    We first consider the symmetric case where $m = n$. By setting $r = n^{1/3} / \log n$ and applying \eqref{formula:balanced} with $m=n$, we obtain the following  
    \begin{gather}
        \label{formula:SymmetricBalanced}
        T(n, n) = O(n^{4/3}) + O(n^{4/3} / \log^4 n) \cdot T(\log^3 n, \log^3 n)
    \end{gather}
    Similarly, we have
     \begin{gather}
        \label{formula:SymmetricBalancedN}
        N(n, n) = O(n^{4/3}/\log^4 n) + O(n^{4/3} / \log^4 n) \cdot N(\log^3 n, \log^3 n)
    \end{gather}
    The recurrences solve to $T(n,n)=n^{4/3}\cdot 2^{O(\log^* n)}$ and $N(n,n)=O(n^{4/3}\cdot \log^* n)$. This means that in $n^{4/3}\cdot 2^{O(\log^* n)}$ time we can compute a collection $\Gamma(A,B,\alpha,\beta)=\{A_t\times B_t\ |\ A_t\subseteq A,B_t\subseteq B\}$ of $O(n^{4/3}\log^* n)$ edge-disjoint complete bipartite graphs, with $\sum_t |A_t|, \sum_t |B_t| = n^{4/3}\cdot 2^{O(\log^* n)}$, and it satisfies the conditions of Problem~\ref{prob:BRS} with $\Pi(A,B,\alpha,\beta)=\emptyset$. 

    We now consider the asymmetric case, i.e., $m \neq n$. We first assume $m\leq n$. Depending on whether $n < m^2$, there are two cases.
        \begin{enumerate}
            \item If $n < m^2$, we set $r = n / m$ so that $m / r = n / r^2$. We apply recurrence~\eqref{formula:PrimalRecurrence} and solve each subproblem of size $(m / r, n / r^2) = (m^2 / n, m^2 / n)$ by our above algorithm for the symmetric case, which results in $T(m, n) = O(n \log m + m^{2/3} n^{2/3} \cdot 2^{O(\log^* n)})$. Similarly, the number of 
            graphs in the produced collection is $O(m^{2/3} n^{2/3} \log^* n)$ and the total size of vertex sets of these graphs is $O(n \log m + m^{2/3} n^{2/3} \cdot 2^{O(\log^* n)})$.
            
            \item If $n \geq m^2$, then we simply apply recurrence~\eqref{formula:PrimalRecurrence} with $r=m$ and obtain $T(m,n)=O(m^2+n\log m)+O(m^2)\cdot T(1,n/m^2)$. Note that $T(1,n/m^2)$ can be solved in $O(n/m^2)$ time by brute force. Therefore, the recurrence solves to $T(m,n)=O(m^2+n\log m)$, which is $O(n\log m)$ as $n\geq m^2$. Similarly, the number of 
            of complete bipartite graphs in the generated collection is $O(n)$, and the total size of vertex sets of these graphs is $O(n \log m)$.
        \end{enumerate}
    In summary, if $m\leq n$, we can solve the complete BRS problem in $O(n \log m + m^{2/3} n^{2/3} \cdot 2^{O(\log^* n)})$ time, by generating $O(m^{2/3} n^{2/3} \log^* n+n)$ complete bipartite graphs whose vertex set size is bounded by $O(n \log m + m^{2/3} n^{2/3} \cdot 2^{O(\log^* n)})$. 
    
    If $m>n$, then the analysis is symmetric with the notation $m$ and $n$ flipped in the above complexities. The theorem is thus proved. 
\end{proof}

For comparison, Katz and Sharir~\cite{ref:KatzAn97} solved the complete BRS problem in $O((m^{2/3} n^{2/3} + m + n) \log m)$ time by producing $O(m^{2/3} n^{2/3} + m + n)$ complete bipartite graphs whose total vertex set size is $O((m^{2/3} n^{2/3} + m + n) \log m)$). Our result improves their runtime and vertex set size by almost a logarithmic factor with slightly more graphs produced. One may wonder whether Chan and Zheng's recent techniques~\cite{ref:ChanHo22} could be used to reduce the factor $2^{O(\log^*n)}$. It is not clear to us whether this is possible. Indeed, Chan and Zheng's techniques are mainly for solving point locations in line arrangements and in their problem they only need to locate a single cell of the arrangement that contains a point. In the point location step of our problem (i.e., computing the canonical sets $B(\sigma)$ in Lemma~\ref{lem:10}), however, we have to use hierarchical cutting and construct the canonical sets $B(\sigma)$ for each cell $\sigma$ that contains the point in every cutting $\Xi_i$, $1\leq i\leq k$ (i.e., our problem needs to place each point in $\Theta(\log r)$ cells and this placement operation already takes $\Theta(\log r)$ time).

\section{Distance selection}
\label{sec:DistanceSelection}

In this section, we present our algorithm for the distance selection problem. 
Let $P$ be a set of $n$ points in the plane. Define $\calE(P)$ as the set of distances of all pairs of points of $P$. Given an integer $1 \leq k \leq \binom{n}{2}$, the problem is to find the $k$-th smallest value in $\calE(P)$, denoted by $\delta^*$. 

Given any $\delta$, the {\em decision problem} is to determine whether  $\delta\geq \delta^*$. Wang~\cite{ref:WangUn23} recently gave an $O(n^{4/3})$ time algorithm that can compute the number of values of $\calE(P)$ at most $\delta$, denoted by $k_{\delta}$. Observe that $\delta\geq \delta^*$ if and only if $k_{\delta}\geq k$. Thus, using Wang's algorithm~\cite{ref:WangUn23}, the decision problem can be solved in $O(n^{4/3})$ time. 
We should point out that the $O(n^{4/3}\log^2 n)$ time algorithm of Katz and Sharir~\cite{ref:KatzAn97} for computing $\delta^*$ utilizes a decision algorithm of $O(n^{4/3}\log n)$ time. However, even if we  replace their decision algorithm by Wang's $O(n^{4/3})$ time algorithm, the runtime of the overall algorithm for computing $\delta^*$ is still $O(n^{4/3}\log^2 n)$ because other parts of the algorithm dominate the total time. To reduce the overall time to $O(n^{4/3}\log n)$, new techniques are needed, in addition to using the faster $O(n^{4/3})$ time decision algorithm. These new techniques include, for instance, Lemma~\ref{lem:10} for the partial BRS problem, as will be seen below. 

Before presenting the details of our algorithm, we first prove the following lemma, which is critical to our algorithm and is obtained by using Lemma~\ref{lem:10}. 

\begin{lemma}
\label{lem:20}
Given an interval $(\alpha,\beta]$, Problem~\ref{prob:BRS} with $A=P$ and $B=P$ can be solved in $O(n^{4/3})$ time by computing two collections $\Gamma(P,P,\alpha,\beta) = \{A_t \times B_t\ |\ A_t, B_t \subseteq P\}$ and $\Pi(P,P,\alpha,\beta) = \{A'_s \times B'_s\ |\ A'_s, B'_s \subseteq P\}$ with the following complexities: (1) $|\Gamma|=O(n^{4/3} / \log^4 \log n)$; (2) $\sum_t |A_t|, \sum_t |B_t| = O(n^{4/3})$; (3) $|\Pi|=O(n^{4/3} / \log^4 \log n)$; (4) $|A'_s|, |B'_s|=O(\log^3\log n)$, for each  $A_s'\times B'_s\in \Pi$.
\end{lemma}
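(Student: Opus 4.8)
The plan is to obtain $\Gamma$ and $\Pi$ by running the algorithm behind Lemma~\ref{lem:10} with $A=B=P$ (so $m=n$) in exactly two levels of recursion, choosing the cutting parameter at each level as in the symmetric case of Theorem~\ref{theorem:PBRS}, and halting after the second level so that the leftover ``unsolved'' subproblems have size $\Theta(\log^3\log n)$. (We may assume $n$ exceeds a suitable constant; otherwise the whole instance is handled in $O(1)$ time by brute force.) First I would apply recurrence~\eqref{formula:balanced} with $m=n$ and $r_0=n^{1/3}/\log n$, which satisfies $1\le r_0\le n^{1/3}$ for large $n$; this is exactly recurrence~\eqref{formula:SymmetricBalanced}. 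In $O(n^{4/3})$ time it produces $O(r_0^4)=O(n^{4/3}/\log^4 n)$ edge-disjoint complete bipartite graphs, all of whose pairs have distance in $(\alpha,\beta]$, with total vertex-set size $O(n^{4/3})$ (these enter $\Gamma$), together with $O(r_0^4)=O(n^{4/3}/\log^4 n)$ ``unsolved'' subproblems, each a partial-BRS instance on two point sets each of size at most $n/r_0^3=\log^3 n$, whose union records every remaining pair with distance in $(\alpha,\beta]$. As in the proof of Lemma~\ref{lem:10}, these subproblems are pairwise edge-disjoint (each point of the right-hand set lies in a unique standard subset of a unique leaf cell) and edge-disjoint from the $\Gamma$-graphs produced so far.

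Next, for each level-1 subproblem, say on point sets $A'$ and $B'$ with $|A'|,|B'|\le N_1:=\log^3 n$, I would apply recurrence~\eqref{formula:balanced} once more, now with parameter $r_1\approx N_1^{1/3}/\log N_1=\Theta(\log n/\log\log n)$ (again $1\le r_1\le N_1^{1/3}$ for large $n$), and then stop, declaring the resulting leftover subproblems to be the graphs of $\Pi$. Each such application runs in $O(N_1 r_1\log r_1+N_1 r_1)=O(\log^4 n)$ time, produces $O(r_1^4)=O(\log^4 n/\log^4\log n)$ further $\Gamma$-graphs (all pairs at distance in $(\alpha,\beta]$) of total vertex-set size $O(\log^4 n)$, and leaves $O(r_1^4)=O(\log^4 n/\log^4\log n)$ graphs for $\Pi$, each with vertex sets of size at most $N_1/r_1^3=O(\log^3\log n)$.

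Then I would set $\Gamma$ to be the union of all $\Gamma$-graphs from both levels and $\Pi$ to be the union of the level-2 leftover subproblems. Edge-disjointness of all graphs in $\Gamma$ follows from Lemma~\ref{lem:10} within each single application, from edge-disjointness of the level-1 subproblems across distinct siblings, and from the fact that level-2 graphs only touch pairs lying inside level-1 subproblems (hence are disjoint from level-1 $\Gamma$-graphs); the same reasoning gives edge-disjointness of the $\Pi$-graphs. Condition~1 of Problem~\ref{prob:BRS} holds because every $\Gamma$-graph was created as a ``verified'' graph by an application of Lemma~\ref{lem:10}. Condition~2 is checked by tracing a pair $(a,b)$ with $\lVert ab\rVert\in(\alpha,\beta]$: at level~1 it is either recorded uniquely in some $\Gamma$-graph or placed in exactly one level-1 subproblem, and then at level~2 on that subproblem it is either recorded uniquely in some $\Gamma$-graph or placed in exactly one level-2 leftover subproblem, i.e.\ in exactly one $\Pi$-graph. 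Summing the complexities: the running time is $O(n^{4/3})+O(n^{4/3}/\log^4 n)\cdot O(\log^4 n)=O(n^{4/3})$; $|\Gamma|=O(n^{4/3}/\log^4 n)+O(n^{4/3}/\log^4 n)\cdot O(\log^4 n/\log^4\log n)=O(n^{4/3}/\log^4\log n)$; $|\Pi|=O(n^{4/3}/\log^4 n)\cdot O(\log^4 n/\log^4\log n)=O(n^{4/3}/\log^4\log n)$; $\sum_t|A_t|,\sum_t|B_t|=O(n^{4/3})+O(n^{4/3}/\log^4 n)\cdot O(\log^4 n)=O(n^{4/3})$; and $|A'_s|,|B'_s|=O(\log^3\log n)$ for each $A'_s\times B'_s\in\Pi$.

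The hard part is the parameter tuning rather than any new geometric idea. A single application of Lemma~\ref{lem:10} with the parameter $r=n^{1/3}/\log\log n$ that would directly give $|\Gamma|=|\Pi|=\Theta(n^{4/3}/\log^4\log n)$ and subproblem size $\Theta(\log^3\log n)$ costs $\Theta(n^{4/3}\log n/\log\log n)$ time, which is too much. The resolution is to spend one ``cheap'' level first, with $r_0=n^{1/3}/\log n$, which absorbs the $\log r$ overhead into the $O(n^{4/3})$ budget while shrinking the instance to size $\log^3 n$; only then do we apply the second level, where the same $\log r$ overhead is merely a $\Theta(\log\log n)$ factor on a size-$\log^3 n$ problem, so the two levels together still cost $O(n^{4/3})$. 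The remaining work is the bookkeeping of edge-disjointness across the two recursion levels and among sibling subproblems, which reduces to the partition structure already established in the proof of Lemma~\ref{lem:10}.
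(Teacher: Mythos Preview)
Your proposal is correct and is essentially identical to the paper's own proof: apply Lemma~\ref{lem:10} once with $r_0=n^{1/3}/\log n$, then apply it again to each leftover subproblem with $r_1=\Theta(\log n/\log\log n)$, and take the obvious unions for $\Gamma$ and $\Pi$. Your write-up is in fact more explicit than the paper's on edge-disjointness and on verifying Conditions~1--2 of Problem~\ref{prob:BRS}, and your closing remark about why a single-level choice $r=n^{1/3}/\log\log n$ fails is a nice addition.
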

\begin{proof}
    \label{proof:lem-DS-UTBRS}
    We first apply Lemma~\ref{lem:10} with $A = P$, $B = P$, and $r = n^{1/3} / \log n$. This constructs a collection $\Gamma_1 = \{A_t \times B_t\ |\ A_t, B_t \subseteq P\}$ of $O(n^{4/3} / \log^4 n)$ edge-disjoint complete bipartite graphs in $O(n^{4/3})$ time. The total size of vertex sets of these graphs is $O(n^{4/3})$, i.e., $\sum_t |A_t|, \sum_t |B_t| = O(n^{4/3})$. We also have a collection $\Pi_1 = \{A_s' \times B_s'\ |\ A_s', B_s' \subseteq P\}$ of $O(n^{4/3} / \log^4 n)$ edge-disjoint complete bipartite graphs that record uncertain point pairs,
    with $|A_s'|, |B_s'| = O(\log^3 n)$. 

    Hence, the number of uncertain pairs of points of $P$ (i.e., we do not know whether their distances are in $(\alpha, \beta]$) is $\sum_s|A'_s|\cdot |B'_s|=O(n^{4/3} \log^2 n)$. To further reduce this number, we apply Lemma~\ref{lem:10} on every pair $(A_s', B_s')$ of $\Pi_1$. More specifically, for each pair $(A_s', B_s')$ of $\Pi_1$, we apply Lemma~\ref{lem:10} with $A = A_s'$, $B = B_s'$, and $r = \log n / \log \log n$. This computes a collection $\Gamma_s$ of $O(\log^4 n / \log^4 \log n)$ edge-disjoint complete bipartite graphs in $O(\log^4 n)$ time; the total size of vertex sets of all graphs in $\Gamma_s$ is $O(\log^4 n)$. We also have a collection $\Pi_s$ of $O(\log^4 n / \log^4 \log n)$ edge-disjoint complete bipartite graphs.
    The size of each vertex set of each graph of $\Pi_s$ is bounded by $O(\log^3 \log n)$.
    The total time for Lemma~\ref{lem:10} on all pairs $(A_s',B_s')$ of $\Pi_1$ as above is $O(n^{4/3})$. We return $\Gamma_1 \cup \bigcup_s \Gamma_s$ as collection $\Gamma$, and $\bigcup_s \Pi_s$ as collection $\Pi$ in the lemma statement.
    As such, the complexities in the lemma statement hold. 
\end{proof}

In what follows, we describe our algorithm for computing $\delta^*$. 
Like Katz and Sharir's algorithm~\cite{ref:KatzAn97}, our algorithm proceeds in stages. 
Initially, we have $I_0 = (0, +\infty]$.
In each $j$-th stage, an interval $I_j = (\alpha_i, \beta_j]$ is computed from $I_{j - 1}$ such that $I_j$ must contain $\delta^*$ and the number of values of $\calE(P)$ in $I_j$ is a constant fraction of that in $I_{j-1}$. Specifically, 
we will prove that $|\calE(P)\cap I_j|=O(n^2 \rho^j)$ holds for each $j$, for some constant $\rho < 1$. Once $|\calE(P)\cap I_j|$ is no more than a threshold (to be given later; as will be seen later, this threshold is not constant, which is a main difference between our algorithm and Katz and Sharir's algorithm~\cite{ref:KatzAn97}), we will compute $\delta^*$ directly.
In the following we discuss the $j$-th stage of the algorithm. We assume that we have an interval $I_{j-1}=(\alpha_{j - 1}, \beta_{j - 1}]$ containing $\delta^*$. 

We first apply Lemma~\ref{lem:20} with $(\alpha, \beta] = (\alpha_{j - 1}, \beta_{j - 1}]$. This is another major difference between our algorithm and Katz and Sharir's algorithm~\cite{ref:KatzAn97}, where they solved the complete BRS problem, while we only solve a partial problem (this saves time by a logarithmic factor). Applying Lemma~\ref{lem:20} produces a collection $\Gamma_{j - 1} = \{A_t \times B_t\ |\ A_t, B_t \subseteq P\}$ of $O(n^{4/3} / \log^4 \log n)$ edge-disjoint complete bipartite graphs, with $\sum_t |A_t|, \sum_t |B_t| = O(n^{4/3})$, as well as another collection $\Pi_{j - 1}$ of $O(n^{4/3} / \log^4 \log n)$ graphs. By Lemma~\ref{lem:20} (3) and (4), the number of pairs of points of $P$ in $\Pi_{j-1}$ is $O(n^{4/3} \log^2 \log n)$.

If $\sum_t |A_t| \cdot |B_t| \leq n^{4/3} \log n$, which is our threshold, then this is the last stage of the algorithm and we compute $\delta^*$ directly by the following Lemma~\ref{lem:30}. Each edge of the graph in $\Gamma_{j-1}\cup \Pi_{j-1}$ connects two points of $P$; we say that the distance of the two points is {\em induced} by the edge. 

\begin{lemma}\label{lem:30}
If $\sum_t |A_t| \cdot |B_t| \leq n^{4/3} \log n$, then  $\delta^*$ can be computed in $O(n^{4/3}\log n)$ time. 
\end{lemma}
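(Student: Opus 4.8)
The plan is to convert the two collections $\Gamma_{j-1}$ and $\Pi_{j-1}$ into an explicit list of candidate distances that provably contains $\delta^*$, and then to find $\delta^*$ in that list by a linear-time selection. First I would note that, by the stage invariant, $\delta^*\in I_{j-1}=(\alpha_{j-1},\beta_{j-1}]$, so $\delta^*$ is the distance of some pair $(a,b)$ of points of $P$ with $\lVert ab\rVert\in(\alpha_{j-1},\beta_{j-1}]$. By Condition~2 of Problem~\ref{prob:BRS} applied with $A=B=P$, every such pair is recorded in exactly one graph of $\Gamma_{j-1}\cup\Pi_{j-1}$; hence $\delta^*$ is induced by at least one edge of $\Gamma_{j-1}\cup\Pi_{j-1}$.

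Next I would bound the number of edges to be examined. The number of edges of $\Gamma_{j-1}$ equals $\sum_t|A_t|\cdot|B_t|\le n^{4/3}\log n$ by hypothesis, and by Lemma~\ref{lem:20}(3)--(4) the number of edges of $\Pi_{j-1}$ is $O\!\left(\tfrac{n^{4/3}}{\log^4\log n}\right)\cdot O(\log^6\log n)=O(n^{4/3}\log^2\log n)$; so the total is $O(n^{4/3}\log n)$. I would then expand both collections into an explicit edge list: scanning a complete bipartite graph $A_t\times B_t$ and emitting its $|A_t|\cdot|B_t|$ edges together with the corresponding distances takes $O(|A_t|\cdot|B_t|)$ time, and merely iterating over the graphs costs an additional $O(n^{4/3}/\log^4\log n)$ time, so the whole expansion runs in $O(n^{4/3}\log n)$ time. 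Let $L$ be the multiset of the resulting distances that actually lie in $(\alpha_{j-1},\beta_{j-1}]$: every distance coming from $\Gamma_{j-1}$ qualifies automatically by Condition~1, whereas those from $\Pi_{j-1}$ must be filtered, since $\Pi_{j-1}$ may record uncertain pairs whose distances fall outside the interval. By the edge-disjointness within each collection and the uniqueness in Condition~2, $L$ is exactly the multiset $\calE(P)\cap(\alpha_{j-1},\beta_{j-1}]$ (each pair contributing exactly once), and $|L|=O(n^{4/3}\log n)$.

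Finally, I would compute $k_{\alpha_{j-1}}$, the number of values of $\calE(P)$ that are at most $\alpha_{j-1}$, using Wang's $O(n^{4/3})$-time counting algorithm~\cite{ref:WangUn23}. Since $\alpha_{j-1}<\delta^*\le\beta_{j-1}$, we have $k_{\alpha_{j-1}}<k$ and $k_{\alpha_{j-1}}+|L|\ge k$, so $\delta^*$ is precisely the $(k-k_{\alpha_{j-1}})$-th smallest element of the multiset $L$. Running a deterministic linear-time selection on $L$ then outputs $\delta^*$ in $O(|L|)=O(n^{4/3}\log n)$ time, which dominates and gives the claimed bound.

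I do not expect a serious obstacle; the only points requiring care are (i) verifying that $\delta^*$ is indeed captured, which relies entirely on the invariant $\delta^*\in I_{j-1}$ maintained throughout the stages, and (ii) remembering to filter the uncertain pairs of $\Pi_{j-1}$ by distance before selection while checking that the two bounds $\sum_t|A_t|\cdot|B_t|\le n^{4/3}\log n$ and $O(n^{4/3}\log^2\log n)$ on the two edge sets together keep $|L|$ and the expansion cost within $O(n^{4/3}\log n)$.
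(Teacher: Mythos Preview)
Your proposal is correct and follows essentially the same approach as the paper: enumerate all edge-induced distances from $\Gamma_{j-1}\cup\Pi_{j-1}$, compute $k_{\alpha_{j-1}}$ via Wang's $O(n^{4/3})$ counting algorithm, and then recover $\delta^*$ by linear-time selection of the $(k-k_{\alpha_{j-1}})$-th smallest value. Your write-up is in fact slightly more careful than the paper's, since you explicitly filter the uncertain pairs of $\Pi_{j-1}$ to retain only those distances in $(\alpha_{j-1},\beta_{j-1}]$ before selecting; the paper's proof omits this step but it is needed for the rank computation to be correct.
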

\begin{proof}
We first explicitly compute the set $S$ of  distances induced from edges of all graphs of $\Gamma_{j - 1}$ and $\Pi_{j - 1}$. Since $\sum_t |A_t| \cdot |B_t| \leq n^{4/3} \log n$ and the number of edges of all graphs of $\Pi_{j - 1}$ is $O(n^{4/3} \log^2 \log n)$, we have $|S| = O(n^{4/3} \log n)$ and $S$ can be computed in $O(n^{4/3} \log n)$ time by brute force. 

Then, we compute the number $k_{\alpha_{j-1}}$ of values of $\calE(P)$ that are at most $\alpha_{j-1}$, which can be done in $O(n^{4/3})$ time~\cite{ref:WangUn23}. Observe that $\delta^*$ is the $(k-k_{\alpha_{j-1}})$-th smallest value in $S$. Hence, using the linear time selection algorithm, we can find $\delta^*$ in $O(|S|)$ time, which is $O(n^{4/3} \log n)$.
\end{proof}

We now assume that $\sum_t |A_t| \cdot |B_t| > n^{4/3} \log n$. The rest of the algorithm for the $j$-th iteration takes $O(n^{4/3})$ time.
For each graph $A_t \times B_t \in \Gamma_{j - 1}$, if $|A_t|<|B_t|$, then we switch the name of $A_t$ and $B_t$, i.e., $A_t$ now refers to $B_t$ and $B_t$ refers to the original $A_t$. Note that this does not change the solution of the partial BRS produced by Lemma~\ref{lem:20} and it does not change the complexities of Lemma~\ref{lem:20} either. This name change is only for ease of the exposition. Now we have $|A_t|\geq |B_t|$ for each graph $A_t \times B_t \in \Gamma_{j - 1}$. Let $m_t = |A_t|$ and $n_t = |B_t|$. 

We partition each $A_t$ into $g=\lfloor m_t/n_t\rfloor$ subsets $A_{t1},A_{t2},\ldots, A_{tg}$ so that each subset contains $n_t$ elements except that the last subset $A_{tg}$ contains at least $n_t$ but at most $2n_t-1$ elements. 
Each pair $(A_{ti}, B_t)$, $1 \leq i \leq g$, can be viewed as a complete bipartite graph. As in \cite{ref:KatzAn97}, we construct a \emph{$d$-regular LPS-expander graph} $G_{ti}$ on the vertex set $A_{ti} \cup B_t$, for a constant $d$ to be fixed later.\footnote{A good summary of definitions and properties of expanders can be found in Section 2 of \cite{ref:KatzAn97}. Here it suffices for the reader to know the following property (which is needed in the proof of Lemma~\ref{lem:40}): If $X$ and $Y$ are two vertex subsets of a $d$-regular expander graph of $M$ vertices and there are fewer than $3M$ edges connecting points of $X$ and points of $Y$, then $|X| \cdot |Y| \leq 9 M^2 / d$.} The expander $G_{ti}$ has $O(|A_{ti}| + |B_t|)$ edges and can be computed in $O(|A_{ti}| + |B_t|)$ time~\cite{ref:KatzAn97, ref:LubotzkyEx86}. Let $G_t$ be the union of all these expander graphs  $G_{ti}$ over all $i=1,2,\ldots, g$. 
The construction of $G_t$ takes $\sum_{i = 1}^{g} O(|A_{ti}| + |B_t|) = O(|A_t| + \lfloor \frac{m_t}{n_t} \rfloor \cdot |B_t|) = O(|A_t|)$ time. Hence, computing all graphs $\{G_t\}_t$ for all $O(n^{4/3} / \log^4 \log n)$ pairs $A_t\times B_t$ in $\Gamma_{j - 1}$ takes $\sum_t O(|A_t|) = O(n^{4/3})$ time. The number of edges in $G_t$ is $O(|A_t| + |B_t|)$, and thus the number of edges in all graphs $\{G_t\}_t$ is $\sum_t O(|A_t| + |B_t|) = O(n^{4/3})$.

For each edge $(a, b)$ in graph $G_t$ that connects a point $a \in A_t$ and a point $b \in B_t$, we associate it with the interpoint distance $\lVert ab \rVert$. We compute all these distances for all graphs $\{G_t\}_t$ to form a set $S$. 
The size of $S$ is bounded by the number of edges in all graphs $\{G_t\}_t$, which is $O(n^{4/3})$. Note that all values of $S$ are in the interval $I_{j - 1}$.

One way we could proceed from here is to find the largest value $\delta_1$ of $S$ with $\delta_1<\delta^*$ and the smallest value $\delta_2$ with $\delta^*\leq \delta_2$, and then return $(\delta_1,\delta_2]$ as the interval $I_j$ and finish the $j$-th stage of the algorithm. Finding $\delta_1$ and $\delta_2$ could be done by binary search on $S$ using the linear time selection algorithm and the $O(n^{4/3})$ time decision algorithm. Then the runtime of this step would be $O(n^{4/3} \log n)$, resulting in a total of $O(n^{4/3}\log^2 n)$ time for the overall algorithm for computing $\delta^*$ since there are $O(\log n)$ stages. 
To improve the time, as in~\cite{ref:KatzAn97}, we use the ``Cole-like'' technique to reduce the number of calls to the decision algorithm to $O(1)$ in each stage, as follows. 



We assign a \emph{weight} to each value of $S$. Note that since each graph $G_{ti} \in G_t$ is a $d$-regular LPS-expander, the degree of $G_{ti}$ is $d$~\cite{ref:KatzAn97}. Hence, $G_{ti}$
has at most $(|A_{ti}| + |B_t|) \cdot d / 2$ edges and thus it contributes at most $(|A_{ti}| + |B_t|) \cdot d / 2$ values to $S$. We assign each distance induced from $G_{ti}$ a weight equal to $|A_{ti}| \cdot |B_t| / (|A_{ti}| + |B_t|)$. As such, the total weight of the values of $S$ is at most
    \begin{gather*}
        \sum_{t, i} (|A_{ti}| + |B_t|) \cdot \frac{d}{2} \cdot \frac{|A_{ti}| \cdot |B_t|}{|A_{ti}| + |B_t|} = \frac{d}{2}\cdot  \sum_{t, i} |A_{ti}| \cdot |B_t| =\frac{d}{2}\cdot   m_{j-1},
    \end{gather*}
    where $m_{j-1}=\sum_t |A_t| \cdot |B_t|$. 
    Recall that $m_{j-1} > n^{4/3}\log n$ and $|B_t| \leq |A_{ti}|$ in each $G_{ti}$. We can assume $n\geq 16$ so that $m_{j-1}\geq 16$. As such, we have the following bound for the weight of each value in $S$:
        $|A_{ti}| \cdot |B_t|/(|A_{ti}| + |B_t|) \leq |B_t| \leq \sqrt{|B_t| \cdot |A_{ti}|} \leq \sqrt{m_{j-1}} \leq m_{j-1}/4$.
        
We partition the values of $S$ into at most $2d$ intervals $\{I'_1, I'_2, ..., I'_{h}\}$, $1 \leq h \leq 2d$, such that the total weight of values in every interval is at least $m_{j-1} / 4$ and but at most $m_{j-1}/2$. The partition can be done in $O(|S|)$ time, which is $O(n^{4/3})$, using the linear time selection algorithm. Then, we invoke the decision algorithm $\log (2d) = O(1)$ times to find the interval $I'_l$ that contains $\delta^*$, for some $1\leq l\leq h$. We set $I_j = I'_l$. Since the decision algorithm is called $O(1)$ times, this step takes $O(n^{4/3})$ time. 
This finishes the $j$-th stage of the algorithm.

The following Lemma~\ref{lem:40} shows that the number of values of $\calE(P)$ in $I_j$ is a constant portion of that in $I_{j-1}$. This guarantees that the algorithm will finish in $O(\log n)$ stages since $|\calE(P)|=O(n^2)$. As each stage runs in $O(n^{4/3})$ time (except that the last stage takes $O(n^{4/3}\log n)$ time), the total time of the algorithm is $O(n^{4/3}\log n)$. 
\begin{lemma}\label{lem:40}
    \label{lemma:DS-correctness}
    There exists a constant $\rho$ with $0<\rho <1$ such that the number of values of $\calE(P)$ in $I_j$ is at most $\rho$ times the number of values of $\calE(P)$ in $I_{j - 1}$.
\end{lemma}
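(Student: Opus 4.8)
The plan is to prove the claim for the number of \emph{ordered} pairs $(p,q)$ of points of $P$ with $\|pq\|$ in a given interval; under the standard general‑position assumption (all $\binom{n}{2}$ interpoint distances distinct, which may be enforced by a symbolic perturbation) this count is exactly twice $|\calE(P)\cap(\text{interval})|$, so an inequality of the form ``(number of such pairs for $I_j$) $\le \rho\cdot$(number for $I_{j-1}$)'' yields the lemma with the same constant $\rho$. First I would note that $I_j=I'_l\subseteq I_{j-1}$ and that $\Gamma_{j-1},\Pi_{j-1}$ are exactly the collections produced by Lemma~\ref{lem:20} for the interval $(\alpha,\beta]=I_{j-1}$; hence, by the second condition of Problem~\ref{prob:BRS}, every ordered pair at distance in $I_{j-1}$ — in particular every one at distance in $I_j$ — is recorded in a unique graph of $\Gamma_{j-1}\cup\Pi_{j-1}$. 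Therefore the number of pairs at distance in $I_j$ is at most $K^\Gamma+K^\Pi$, where $K^\Gamma$ (resp.\ $K^\Pi$) counts the pairs recorded in $\Gamma_{j-1}$ (resp.\ $\Pi_{j-1}$) whose distance lies in $I_j$. On the other hand, since the graphs of $\Gamma_{j-1}$ are edge‑disjoint and every pair they record genuinely has distance in $I_{j-1}$ (first condition of Problem~\ref{prob:BRS}), the number of pairs at distance in $I_{j-1}$ is at least $\sum_t|A_t|\cdot|B_t|=m_{j-1}$. So it suffices to prove $K^\Gamma+K^\Pi\le\rho\, m_{j-1}$ for some constant $\rho<1$.

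The term $K^\Pi$ is negligible. By parts (3) and (4) of Lemma~\ref{lem:20}, the number of pairs recorded in $\Pi_{j-1}$ is at most $|\Pi_{j-1}|\cdot\max_s|A'_s|\cdot\max_s|B'_s| = O(n^{4/3}/\log^4\log n)\cdot O(\log^6\log n)=O(n^{4/3}\log^2\log n)$, so $K^\Pi=O(n^{4/3}\log^2\log n)$. Because we run the $j$‑th stage (rather than invoking Lemma~\ref{lem:30}) only when $m_{j-1}=\sum_t|A_t|\cdot|B_t|>n^{4/3}\log n$, and $\log^2\log n=o(\log n)$, we get $K^\Pi\le\tfrac18 m_{j-1}$ for all sufficiently large $n$ (smaller $n$ being handled directly).

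For $K^\Gamma$ I would run the expander argument of Katz and Sharir~\cite{ref:KatzAn97} on the graphs $G_{ti}$ constructed in this stage. Write $K^\Gamma=\sum_{t,i}K_{ti}$, where $K_{ti}$ is the number of pairs of $A_{ti}\times B_t$ at distance in $I_j$, the sum ranging over all blocks $A_{ti}$ of all $A_t\times B_t\in\Gamma_{j-1}$; set $M_{ti}=|A_{ti}|+|B_t|$, and recall $|A_{ti}|\ge|B_t|$, so $M_{ti}\le 3|B_t|$ and the weight $w_{ti}=|A_{ti}|\cdot|B_t|/M_{ti}$ that this stage assigns to each distance induced by $G_{ti}$ satisfies $w_{ti}=\Theta(M_{ti})$. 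The key per‑graph fact, obtained by applying the expander property quoted in the footnote to $G_{ti}$ and to vertex subsets of $G_{ti}$ determined by the endpoints of $I_j$ (together with the $\ge 3M_{ti}$ lower bound on the number of $A_{ti}$–$B_t$ edges of $G_{ti}$, itself the footnote property applied to $X=A_{ti},Y=B_t$), is that $G_{ti}$ samples the pairs of $A_{ti}\times B_t$ well: letting $f_{ti}$ be the number of $A_{ti}$–$B_t$ edges of $G_{ti}$ whose induced distance lies in $I_j$, one gets a bound of the form
\[
K_{ti}\ \le\ O\!\left(\frac{M_{ti}^2}{d}\right)\ +\ c_0\, f_{ti}\, w_{ti},
\]
with $c_0$ an absolute constant less than $\tfrac12$ (the first term is the sampling error, the second the expander's own estimate of $K_{ti}$). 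Summing over $t,i$ and using (i) $\sum_{t,i}M_{ti}^2=O(m_{j-1})$ — for each $t$ there are $\lfloor|A_t|/|B_t|\rfloor$ blocks, each with $M_{ti}\le 3|B_t|$, so $\sum_i M_{ti}^2=O(|A_t|\cdot|B_t|)$ — and (ii) $\sum_{t,i}f_{ti}\,w_{ti}$ equals the total weight of the distances of $S$ lying in $I_j$, which is at most $m_{j-1}/2$ by the choice of $I_j=I'_l$, we obtain $K^\Gamma\le c_0\,m_{j-1}/2+O(m_{j-1}/d)$. Choosing the absolute constant $d$ large enough (and $>81$, as the footnote estimate requires with $X=A_{ti},Y=B_t$) makes $K^\Gamma\le\tfrac14 m_{j-1}$. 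Altogether the number of pairs at distance in $I_j$ is at most $K^\Gamma+K^\Pi\le\tfrac38 m_{j-1}$, which is at most $\tfrac38$ times the number at distance in $I_{j-1}$; hence the lemma holds with $\rho=\tfrac38$.

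The step I expect to be the real obstacle is the per‑graph inequality on $K_{ti}$, i.e.\ passing from ``few expander edges of $G_{ti}$ fall in $I_j$'' to ``few actual pairs of $A_{ti}\times B_t$ fall in $I_j$''. The one‑directional property in the footnote has to be instantiated on the correct vertex subsets of $G_{ti}$ — essentially the sets of vertices all of whose incident expander edges carry induced distances on one prescribed side of an endpoint of $I_j$ — and then combined with the weight calibration $w_{ti}=|A_{ti}|\cdot|B_t|/M_{ti}$; this is precisely the technical core of Katz and Sharir's analysis, which we reuse. Everything else is routine bookkeeping, and the only genuinely new point relative to their argument is the remark in the second paragraph that the unresolved pairs parked in $\Pi_{j-1}$, numbering only $O(n^{4/3}\log^2\log n)$, are absorbed at no cost because we have not yet reached the terminal stage.
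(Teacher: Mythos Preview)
Your high–level plan is exactly the paper's: bound the contribution of the uncertain pairs in $\Pi_{j-1}$ using the threshold $m_{j-1}>n^{4/3}\log n$, and bound the contribution of the pairs in $\Gamma_{j-1}$ by the Katz–Sharir expander argument. The $K^\Pi$ part and the summation identities $\sum_{t,i}M_{ti}^2=O(m_{j-1})$ and $\sum_{t,i}f_{ti}w_{ti}\le m_{j-1}/2$ are fine.

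The gap is in your per-graph inequality $K_{ti}\le O(M_{ti}^2/d)+c_0\,f_{ti}\,w_{ti}$ and, more importantly, in how you propose to obtain it. The set of pairs $(a,b)\in A_{ti}\times B_t$ with $\|ab\|\in I_j$ is \emph{not} a product set $X\times Y$, so the footnote property cannot be applied to it directly, and your suggested choice of $X,Y$ (``vertices all of whose incident expander edges carry induced distances on one side of an endpoint of $I_j$'') does not capture $K_{ti}$: a vertex $a$ can have some neighbors at distance $\le\alpha_j$ and others at distance in $I_j$, so such sets need not cover the relevant pairs, nor does a bound on $|X|\cdot|Y|$ translate into a bound on $K_{ti}$. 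This is precisely why the Katz–Sharir argument (and the paper's proof) introduces a \emph{geometric} step you omit: build a $(1/r)$-cutting for the boundary circles of the annuli of radii $\alpha_j,\beta_j$ centered at the points of $A_{ti}$; in each cell $\sigma$ the pairs at distance in $I_j$ do form a product $A_{ti}(\sigma)\times B_t(\sigma)$ (centers of annuli containing $\sigma$ times points of $B_t$ in $\sigma$), and \emph{then} the footnote property applies to each such product. The argument further splits the graphs into $\calG_1$ (those with $f_{ti}<3M_{ti}$), where the cutting gives $K_{ti}=O\big((1/r+r^2/d)\,|A_{ti}|\,|B_t|\big)$ and one sets $r=d^{1/3}$, and $\calG_2$ (those with $f_{ti}\ge 3M_{ti}$), whose total $|A_{ti}|\cdot|B_t|$ is bounded via the weight cap $m_{j-1}/2$. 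In particular the error term one actually gets is $O(M_{ti}^2/d^{1/3})$, not $O(M_{ti}^2/d)$; your stated bound is too strong and, as written, unproved. Once you insert the cutting step and the $\calG_1/\calG_2$ split, the rest of your write-up goes through.
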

\begin{proof}
Define $n_j$ (resp., $n_{j-1}$) as the number of values of $\calE(P)$ in $I_j$ (resp., $I_{j-1}$). Our goal is to find a constant $\rho\in (0,1)$ so that $n_j\leq \rho\cdot n_{j-1}$ holds. 

Recall that $m_{j-1}$ is the number of distances induced from the graphs of $\Gamma_{j - 1}$. 
Define $m'_{j-1}$ as the number of distances induced from the graphs of $\Pi_{j - 1}$. 
Define $q_j$ (resp., $q'_j$) as the number of interpoint distances of $\calE(P)\cap I_j$ whose point pairs are recorded in $\Gamma_{j - 1}$ (resp., $\Pi_{j - 1}$). 
Note that all interpoint distances induced from graphs of $\Gamma_{j - 1}$ are in $I_{j-1}$. 
Hence, $m_{j-1}\leq n_{j-1}$. 
By definition, $n_j=q_j+q'_j$ and $q'_j \leq m'_{j-1}$. By Lemma~\ref{lem:20} (3) and (4), we have $m'_{j-1}= O(n^{4/3} \log^2 \log n)$. 

We make the following {\bf claim:} there exists a constant $\gamma \in (0, 1/3)$ such that $q_j \leq \gamma \cdot m_{j-1}$. Before proving the claim, we prove the lemma using the claim. 

As this is not the last stage of the algorithm (since otherwise $\delta^*$ would have already been computed without producing interval $I_j$), it holds that $m_{j-1} > n^{4/3} \log n$.
Since $m'_{j-1} = O(n^{4/3} \log^2 \log n)$, there exists a constant $c' \in (0, 1/3)$ such that $\frac{m'_{j-1}}{m_{j-1}}\leq c'$ when $n$ is sufficiently large. As $n_j=q_j+q'_j$,  $q'_j \leq m'_{j-1}$, and $m_{j-1}\leq n_{j-1}$, we can obtain the following using the above claim:
    \begin{gather*}
        n_j = q_j + q'_j \leq q_j + m'_{j-1}\leq \gamma \cdot m_{j-1} + c' \cdot m_{j-1} \leq (\gamma + c') \cdot m_{j-1}\leq (\gamma + c') \cdot n_{j-1}.
    \end{gather*}

Set $\rho = \gamma + c'$. Since both $\gamma$ and $c'$ are in $(0,1/3)$, we have $\rho \in (0, 2/3)$ and $n_j\leq \rho\cdot n_{j-1}$. This proves the lemma.

\paragraph{Proof of the claim.}
We now prove that there exists a constant $\gamma \in (0, 1/3)$ such that $q_j \leq \gamma \cdot m_{j-1}$. The proof is similar to that in \cite{ref:KatzAn97}.

Consider a pair $(A_{ti}, B_t)$, $1 \leq i \leq g$, obtained in our algorithm. Some edges of the graph $G_{ti}$ induce interpoint distances in $S$, which may be in $I_j$. We partition all such graphs $G_{ti}$ into two sets. 
Let $\calG_1$ denote the set of those graphs $G_{ti}$ that contribute fewer than $3(|A_{ti}| + |B_t|)$ interpoint distances in $S\cap I_j$, and $\calG_2$ the set of the rest of such graphs (each of them contributes at least $3(|A_{ti}| + |B_t|)$ interpoint distances in $S\cap I_j$).

\paragraph{The set $\calG_1$.}
We first consider set $\calG_1$. For a graph $G_{ti} \in \calG_1$ built on pair $(A_{ti}, B_t)$, let $\calD_{ti}$ be the set of annuli centered at points of $A_{ti}$ with radii $\alpha_j$ and $\beta_j$ (recall that $I_j=(\alpha_j,\beta_j]$). For the purpose of analysis only, we construct a $1/r$-cutting $\Xi$ for the boundary circles of the annuli in $\calD_{ti}$, where $r$ is a constant to be specified later. This partitions the plane into $O(r^2)$ cells such that each cell intersects at most $O(|\calD_{ti}| / r)$ boundary circles of annuli in $\calD_{ti}$.

For each cell $\sigma \in \Xi$, let $B_t(\sigma)$ denote the set of points of $B_t$ inside $\sigma$, $\calD_{ti}(\sigma)$ the set of annuli of $\calD_{ti}$ that fully contains $\sigma$, and $\calD_{ti}'(\sigma)$ the set of annuli of $\calD_{ti}$ that have at least one boundary circle intersecting $\sigma$. Let $N_{ti}$ denote the number of interpoint distances between points of $A_{ti}$ and points of $B_t$ that are in $I_j$. Then we have
    \begin{gather*}
        N_{ti} \leq \sum_{\sigma \in \Xi} |\calD_{ti}(\sigma)| \cdot |B_t(\sigma)| + \sum_{\sigma \in \Xi} |\calD_{ti}'(\sigma)| \cdot |B_t(\sigma)|
    \end{gather*}
    Since the number of annuli of $\calD_{ti}$ that intersect a cell $\sigma \in \Xi$ is $O(|\calD_{ti}| / r)$ and $|\calD_{ti}| = |A_{ti}|$, we have $|\calD_{ti}'(\sigma)| = O(|A_{ti}| / r)$. Using $\sum_{\sigma \in \Xi} |B_t(\sigma)| = |B_t|$, we can derive
    \begin{gather*}
        \sum_{\sigma \in \Xi} |\calD_{ti}'(\sigma)| \cdot |B_t(\sigma)| = O\left( \frac{|A_{ti}| \cdot |B_t|}{r} \right)
    \end{gather*}

    Now we consider $\sum_{\sigma \in \Xi} |\calD_{ti}(\sigma)| \cdot |B_t(\sigma)|$. Let $A_{ti}(\sigma) \subseteq A_{ti}$ denote the set of centers of the annuli of $\calD_{ti}(\sigma)$. 
    For any point $a\in A_{ti}(\sigma)$ and $b\in B_t(\sigma)$, 
    their distance $\lVert ab \rVert$ is in $I_{j}$ by the definition of $\calD_{ti}(\sigma)$. If an edge connecting $a$ and $b$ exists in graph $G_{ti}$, then $\lVert ab \rVert$ must be in $S$ and thus is in $I_j$ as well, i.e., such an edge of $G_{ti}$ contributes a value in $S\cap I_j$. Since $G_{ti}$ is in $\calG_1$, it has fewer than $3(|A_{ti}| + |B_t|)$ edges whose induced interpoint distances are in $I_j$, which implies that the number of edges of $G_{ti}$ connecting points of $A_{ti}(\sigma)$ and points of $B_t(\sigma)$ in $G_{ti}$ is smaller than $3(|A_{ti}| + |B_t|)$. According to Corollary 2.5 in~\cite{ref:KatzAn97}, if $X$ and $Y$ are two vertex subsets of a $d$-regular expander graph of $M$ vertices and there are fewer than $3M$ edges connecting points of $X$ and points of $Y$, then $|X| \cdot |Y| \leq 9 M^2 / d$. Applying this result (with $X = A_{ti}(\sigma)$, $Y = B_t(\sigma)$, and $M = |A_{ti}| + |B_t|$), we can derive the following 
    \begin{gather*}
        \sum_{\sigma \in \Xi} |\calD_{ti}(\sigma)| \cdot |B_t(\sigma)| \leq O(r^2) \cdot \frac{9 (|A_{ti}| + |B_t|)^2}{d} = O\left( \frac{r^2 (|A_{ti}| + |B_t|)^2}{d} \right)
    \end{gather*}

    In summary, we have,
    \begin{gather*}
    N_{ti} = O\left( \frac{|A_{ti}| \cdot |B_t|}{r} \right) + O\left( \frac{r^2 (|A_{ti}| + |B_t|)^2}{d} \right).
    \end{gather*}
    Since $|B_t| \leq |A_{ti}| \leq 2|B_t|$ by our partition of set $A_t$, we have $(|A_{ti}| + |B_t|)^2 \leq 5 |A_{ti}| \cdot |B_t|$, which leads to
    \begin{gather*}
        N_{ti} = O \left( \left[\frac{1}{r} + \frac{r^2}{d}\right] \cdot |A_{ti}| \cdot |B_t| \right)
    \end{gather*}

    By setting $r = d^{1/3}$ and $c$ to be appropriately proportional to $1/d^{1/3}$, we obtain $N_{ti} \leq c \cdot |A_{ti}| \cdot |B_t|$. Summing up all these inequalities for all graphs $G_{ti}$ in set $\calG_1$ leads to $N(\calG_1) \leq c \cdot \sum_{G_{ti}\in \calG_1} |A_{ti}| \cdot |B_t|$, where $N(\calG_1)$ is the number of distances between points of $A_{ti}$ and points of $B_t$ that are in $I_j$ for all graphs $G_{ti} \in \calG_1$. Since $\sum_{G_{ti}\in \calG_1}|A_{ti}| \cdot |B_t|\leq \sum_t|A_t|\cdot |B_t|=m_{j-1}$, we obtain $N(\calG_1)\leq c\cdot m_{j-1}$. 

\paragraph{The set $\calG_2$.}
    We now consider the set $\calG_2$. Since each graph $G_{ti} \in \calG_2$ contributes at least $3(|A_{ti}| + |B_t|)$ interpoint distances in $S\cap I_j$, $G_{ti}$ contributes at least $3|A_{ti}| \cdot |B_t|$ to the total weight of distances in $S\cap I_j$. Recall that the total weight of distances in $S\cap I_j$ is at most $m_{j-1}/2$ by our algorithm, thus we have $\sum_{G_{ti} \in \calG_2} |A_{ti}| \cdot |B_t| \leq m_{j-1} / 6$. Let $N(\calG_2)$ denote the number of distances between points of $A_{ti}$ and points of $B_t$ that are in $I_j$ for all graphs $G_{ti} \in \calG_2$. We have $N(\calG_2) \leq \sum_{G_{ti} \in \calG_2} |A_{ti}| \cdot |B_t|$ since $I_{j} \subseteq I_{j - 1}$. Therefore, $N(\calG_2)\leq m_{j-1}/6$.

\paragraph{Summary.}
    By definition, $q_j = N(\calG_1) + N(\calG_2)$. As $N(\calG_1)\leq c\cdot m_{j-1}$ and $N(\calG_2)\leq m_{j-1}/6$, we can derive 
    \begin{align*}
        q_j  = N(\calG_1) + N(\calG_2) \leq c \cdot m_{j-1}+ \frac{1}{6} \cdot m_{j-1}
        = (c + \frac{1}{6}) \cdot m_{j-1}.
    \end{align*}
    Let $\gamma = c + 1/6$. Then $\gamma < 1/3$ if $d$ is sufficiently large. As such, we have $q_j \leq \gamma \cdot m_{j-1}$ for a constant $\gamma \in (0, 1/3)$. The claim is thus proved. 
\end{proof}

We conclude with the following result.
\begin{theorem}
    \label{theorem:DS}
    Given a set $P$ of $n$ points in the plane and an integer $1 \leq k \leq \binom{n}{2}$, the $k$-th smallest interpoint distance of $P$ can be computed in $O(n^{4/3} \log n)$ time.
\end{theorem}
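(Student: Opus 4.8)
\textbf{Proof proposal for Theorem~\ref{theorem:DS}.}
The plan is to run exactly the staged algorithm developed in this section and to account for its total running time by (i) bounding the cost of a single non-final stage, (ii) bounding the number of stages, and (iii) bounding the cost of the final stage. Recall that we maintain intervals $I_0=(0,+\infty]\supseteq I_1\supseteq I_2\supseteq\cdots$, each containing $\delta^*$, and that in stage $j$ we first invoke Lemma~\ref{lem:20} on $(\alpha_{j-1},\beta_{j-1}]$ to get $\Gamma_{j-1}$ and $\Pi_{j-1}$. The test $\sum_t|A_t|\cdot|B_t|\le n^{4/3}\log n$ decides whether this is the final stage.

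For a non-final stage I would argue the $O(n^{4/3})$ bound piece by piece: Lemma~\ref{lem:20} itself costs $O(n^{4/3})$; relabelling each $A_t\times B_t$ so that $|A_t|\ge|B_t|$ is free; partitioning each $A_t$ into $\lfloor m_t/n_t\rfloor$ pieces of size $\Theta(n_t)$ and building an LPS-expander $G_{ti}$ on each $A_{ti}\cup B_t$ costs $\sum_i O(|A_{ti}|+|B_t|)=O(|A_t|)$ per $t$, hence $\sum_t O(|A_t|)=O(n^{4/3})$ overall; collecting the $O(n^{4/3})$ edge-distances into $S$, assigning the weights $|A_{ti}|\cdot|B_t|/(|A_{ti}|+|B_t|)$, and splitting $S$ into at most $2d$ weight-balanced intervals all take $O(|S|)=O(n^{4/3})$ via linear-time selection; and locating the interval $I'_l\ni\delta^*$ needs only $\log(2d)=O(1)$ calls to the $O(n^{4/3})$-time decision algorithm of Wang~\cite{ref:WangUn23}. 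So each non-final stage runs in $O(n^{4/3})$ time, and we set $I_j=I'_l$.

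Next I would bound the number of stages. By Lemma~\ref{lem:40} there is a constant $\rho\in(0,1)$ with $|\calE(P)\cap I_j|\le\rho\cdot|\calE(P)\cap I_{j-1}|$, and $|\calE(P)\cap I_0|\le\binom n2=O(n^2)$, so $|\calE(P)\cap I_{j}|=O(\rho^{j}n^2)$. Since $m_{j-1}=\sum_t|A_t|\cdot|B_t|$ counts distances all lying in $I_{j-1}$, we have $m_{j-1}\le|\calE(P)\cap I_{j-1}|=O(\rho^{j-1}n^2)$; hence once $j-1=\Theta(\log n)$ the quantity $m_{j-1}$ drops below the threshold $n^{4/3}\log n$ and the algorithm enters its final stage. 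Thus there are $O(\log n)$ stages in all, contributing $O(\log n)\cdot O(n^{4/3})=O(n^{4/3}\log n)$ time. The final stage is handled by Lemma~\ref{lem:30}: when $\sum_t|A_t|\cdot|B_t|\le n^{4/3}\log n$, and since $\Pi_{j-1}$ records only $O(n^{4/3}\log^2\log n)$ pairs, the multiset $S$ of all induced distances has size $O(n^{4/3}\log n)$, and $\delta^*$ is its $(k-k_{\alpha_{j-1}})$-th smallest element, computable by linear-time selection after an $O(n^{4/3})$-time count of $|\calE(P)\cap(0,\alpha_{j-1}]|$; this stage costs $O(n^{4/3}\log n)$. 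Summing, the total is $O(n^{4/3}\log n)$.

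The only genuinely delicate point is the interplay behind termination, and it has already been isolated into Lemma~\ref{lem:40}: one must know that the shrinkage factor $\rho$ is a \emph{fixed} constant independent of the stage (so that $O(\log n)$ stages truly suffice even though the final-stage threshold $n^{4/3}\log n$ is super-constant), and one must know that the uncertain pairs in $\Pi_{j-1}$ — which Lemma~\ref{lem:40} does not discard but rather controls via $m'_{j-1}=O(n^{4/3}\log^2\log n)$ — never overwhelm the $\Gamma_{j-1}$ side, i.e. $m'_{j-1}/m_{j-1}$ stays below a constant $<1/3$ precisely because $m_{j-1}>n^{4/3}\log n$ in a non-final stage. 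Granting Lemma~\ref{lem:40}, the remaining work for the theorem is the bookkeeping above, and there is no further obstacle.
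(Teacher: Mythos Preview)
Your proposal is correct and follows exactly the paper's approach: the theorem is stated in the paper simply as the conclusion of the staged algorithm, with the per-stage $O(n^{4/3})$ cost, the $O(\log n)$ stage bound from Lemma~\ref{lem:40}, and the $O(n^{4/3}\log n)$ final stage from Lemma~\ref{lem:30} combined just as you describe. Your bookkeeping and your identification of the only delicate point (the constant $\rho$ in Lemma~\ref{lem:40} and the control of $m'_{j-1}/m_{j-1}$ via the threshold) match the paper's reasoning precisely.
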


Note that once $\delta^*$ is computed, one can find a pair of points of $P$ whose distance is equal to $\delta^*$ in additional $O(n^{4/3})$ time~\cite{ref:WangUn23}. 

\paragraph{A bipartite version.} Our algorithm can be easily extended to the following {\em bipartite version} of the distance selection problem: Given a set $A$ of $m$ points and a set $B$ of $n$ points in the plane, and an integer $1 \leq k \leq mn$, compute the $k$-th smallest interpoint distance $\delta^*$ in the set $\{\lVert ab\rVert\ |\ a\in A, b\in B\}$. The decision problem can be solved in $O(m^{2/3} n^{2/3} + m \log n + n \log m)$ time~\cite{ref:WangUn23}.
To adapt our algorithm to compute $\delta^*$, each stage of the algorithm still computes an interval $I_j$ as before. In the $j$-th stage, we solve the partial BRS problem for $A$ and $B$ with respect to the interval $I_{j - 1}$. We can obtain a result similar to Lemma~\ref{lem:20} (by using Lemma~\ref{lem:20} as a subroutine in an analogous way to Theorem~\ref{theorem:PBRS} for dealing with the asymmetric case). More specifically, if $m \leq n < m^2$ (resp. $n \leq m < n^2$), we construct a hierarchical cutting and process those unsolved subproblems by applying Lemma~\ref{lem:20} with $r = n/m$ (resp. $r = m/n$). If $n \geq m^2$ or $m \geq n^2$, we construct a hierarchical cutting and process those unsolved subproblems in a straightforward manner. As such, we can obtain a collection $\Gamma$ of $O(m^{2/3} n^{2/3} / \log^4 \log(m^2 / n) + m^{2/3} n^{2/3} / \log^4 \log(n^2 / m) + m + n)$ edge-disjoint complete bipartite graphs that record some pairs of $A \times B$ whose interpoint distances are in $I_{j - 1}$. The total size of vertex sets of all graphs in $\Gamma$ is $O(m^{2/3} n^{2/3} + m \log n + n \log m)$. We also have another collection $\Pi$ of edge-disjoint complete bipartite graphs that record a total of
$O(m^{2/3} n^{2/3} \log^2 \log (m + n))$ uncertain pairs of $A \times B$, i.e., we do not know whether their distances are in $I_{j - 1}$. The total runtime is $O(m^{2/3} n^{2/3} + m \log n + n \log m)$. We compute the number of interpoint distances induced from collection $\Gamma$. If this number is at most $(m^{2/3} n^{2/3} + m \log n + n \log m) \log (m + n)$, then this is the last stage of the algorithm and we compute $\delta^*$ directly. Otherwise, we use the ``Cole-like'' technique to perform a binary search on the interpoint distances induced from the expander graphs that are built on vertex sets of the graphs in $\Gamma$, which calls the decision algorithm $O(1)$ times. 
The algorithm will finish within $O(\log (m + n))$ stages by similar analysis to Lemma~\ref{lemma:DS-correctness}. As such, the bipartite distance selection problem can be solved in $O((m^{2/3} n^{2/3} + m \log n + n \log m) \log (m + n))$ time. 


\section{Two-sided discrete Fr\'{e}chet distance with shortcuts}
\label{sec:TS-DFD}

In this section, we show that our techniques in Section~\ref{sec:DistanceSelection} can be used to solve the two-sided DFD problem. Let $A = \{a_1, a_2, ..., a_m\}$ and $B = \{b_1, b_2, ..., b_n\}$ be two sequences of points in the plane. Consider two frogs connected by an inelastic leash, initially placed at $a_1$ and $b_1$, respectively. Each frog is allowed to jump forward at most one step in one move, i.e., if the first frog is currently at $a_i$, then in the next move it can either jump to $a_{i+1}$ or stay at $a_i$. Note that frogs are not allowed to go backwards. The \emph{discrete Fr\'{e}chet distance} (or DFD for short) is defined as the minimum length of the inelastic leash that allows two frogs to reach their destinations, i.e., $a_m$ and $b_n$, respectively.

Because the Fr\'{e}chet distance is very sensitive to outliers, to reduce the sensitivity, DFD with outliers have been proposed~\cite{ref:AvrahamTh15}. Specifically, if we allow the $A$-frog to jump from its current point to any of its succeeding points in each move but $B$-frog has to traverse all points in $B$ in order plus one restriction that only one frog is allowed to jump in each move (i.e., in each move one of the frogs must stay still), then this problem is called \emph{one-sided discrete Fr\'{e}chet distance with shortcuts} (or {\em one-sided DFD} for short), where the goal is to compute the minimum length of the inelastic leash that allows two frogs to reach their destinations.
If we allow both frogs to skip points in their sequences (but again with the restriction that only one frog is allowed to jump in each move), then problem is called {\em two-sided DFD}. 

We focus on the two-sided DFD in this section while the one-sided version will be treated in the next section. Let $\delta^*$ denote the optimal objective value, i.e., the minimum length of the leash. Avraham, Filtser, Kaplan, Katz, and Sharir~\cite{ref:AvrahamTh15} presented an algorithm that can compute $\delta^*$ in $O((m^{2/3} n^{2/3} + m + n) \log^3 (m + n))$ time. In what follows, we show that our techniques in Section~\ref{sec:DistanceSelection} can improve their algorithm to $O((m^{2/3} n^{2/3} \cdot 2^{O(\log^* (m + n))} + m \log n + n \log m) \log (m + n))$ time, roughly a factor of $O(\log^2(m+n))$ faster.

To solve the problem, the authors of \cite{ref:AvrahamTh15} first proposed an algorithm to solve the decision problem, i.e., given any $\delta$, decide whether $\delta^*\leq \delta$; the algorithm runs in $O((m^{2/3} n^{2/3} + m + n) \log^2 (m + n))$ time. Then, to compute $\delta^*$, the authors of \cite{ref:AvrahamTh15} used the bipartite version of the distance selection algorithm from Katz and Sharir~\cite{ref:KatzAn97} for point sets $A$ and $B$ together with their decision algorithm to do binary search on the interpoint distances between points in $A$ and those in $B$, i.e., in each iteration, using the distance selection algorithm to find the $k$-th smallest distance $\delta_k$ for an appropriate $k$ and then call the decision algorithm on $\delta_k$ to decide which way to search. As both the distance selection algorithm~\cite{ref:KatzAn97} and the decision algorithm run in $O((m^{2/3} n^{2/3} + m + n) \log^2 (m + n))$ time, computing $\delta^*$ takes $O((m^{2/3} n^{2/3} + m + n) \log^3 (m + n))$ time. 

In what follows, we first show that the runtime of their decision algorithm can be reduced by a factor of roughly $O(\log^2(m+n))$ using our result in Theorem~\ref{theorem:PBRS} for the complete BRS problem, and then discuss how to improve the optimization algorithm for computing $\delta^*$. 

\paragraph{Improving the decision algorithm.}
The basic idea of the decision algorithm in \cite{ref:AvrahamTh15} is to consider a matrix $M$ whose rows and columns correspond to points in sequences $A$ and $B$, respectively. Each entry $M(i, j)$ of $M$ is $1$ if $\lVert a_i b_j \rVert \leq \delta$, and $0$ otherwise. One can determine whether there exists a path from $M(1, 1)$ to $M(m, n)$ in $M$ that only consists of value $1$ by performing ``upward'' and ``rightward'' moves. The matrix $M$ is not computed explicitly. 
The algorithm first performs a complete BRS with $\alpha = 0$ and $\beta = \delta$ using a result from \cite{ref:KatzAn97} on $A$ and $B$, which generates a collection $\Gamma = \{A_t \times B_t\}_t$ of complete bipartite graphs that record all pairs of $A \times B$ whose interpoint distances are at most $\delta$ in $O((m^{2/3} n^{2/3} + m + n) \log (m + n))$ time, with $\sum_t|A_t|, \sum_t|B_t|=O((m^{2/3} n^{2/3} + m + n) \log (m + n))$. 
Each edge of these graphs corresponds to an entry of value $1$ in $M$. Then for each graph $A_t \times B_t\in \Gamma$, points of $A_t$ and $B_t$ are sorted by their index order into lists $\calL_{A_t}$ and $\calL_{B_t}$, respectively. The sorting takes $O((m^{2/3} n^{2/3} + m + n) \log^2 (m + n))$ time in total. 
With these information in hand, the rest of the algorithm runs in time linear in the total size of vertex sets of graphs in $\Gamma$, which is $O((m^{2/3} n^{2/3} + m + n) \log (m + n))$. 

We can improve their decision algorithm by applying our complete BRS result in Theorem~\ref{theorem:PBRS}. Specifically, applying Theorem~\ref{theorem:PBRS} will produce in $O(m^{2/3} n^{2/3} \cdot 2^{O(\log^* (m + n))} + m \log n + n \log m)$ time a collection $\Gamma$ of complete bipartite graphs that record all pairs of $A \times B$ whose interpoint distances are at most $\delta$. 
To reduce the time on the sorting step, 
when computing the canonical subsets $B(\sigma)$ in Lemma~\ref{lem:10}, we process points of $B$ following their index order. Similarly, when computing the canonical sets of $A_{\sigma}$, we process the circles of $\calC_{\sigma'}$ following the index order of their centers in $A$. This ensures that points in each $A_t$ and each $B_t$ are sorted automatically during the construction, i.e., lists $\calL_{A_t}$ and $\calL_{B_t}$ are available once the algorithm of Theorem~\ref{theorem:PBRS} is done. The rest of the algorithm follows exactly the same as the algorithm in~\cite{ref:AvrahamTh15}, which takes time proportional to the total size of vertex sets of graphs in $\Gamma$, i.e., $O(m^{2/3} n^{2/3} \cdot 2^{O(\log^* (m + n))} + m \log n + n \log m)$ by Theorem~\ref{theorem:PBRS}. As such, the total time of the new decision algorithm is $O(m^{2/3} n^{2/3} \cdot 2^{O(\log^* (m + n))} + m \log n + n \log m)$. 

\paragraph{Improving the optimization algorithm.}
With our new $O((m^{2/3} n^{2/3} + m \log n + n \log m) \log (m + n))$ time bipartite distance selection algorithm in Section~\ref{sec:DistanceSelection} and the above faster decision algorithm, following the same binary search scheme as discussed above, $\delta^*$ can be computed in $O((m^{2/3} n^{2/3} + m \log n + n \log m) \log^2 (m + n))$ time, a logarithmic factor improvement over the result of~\cite{ref:AvrahamTh15}. Notice that the time is dominated by the calls to the bipartite distance selection algorithm.

To further improve the algorithm, an observation is that we do not have to call the distance selection algorithm as an oracle and instead we can use that algorithm as a framework and replace the decision algorithm of the distance selection problem by the decision algorithm of the two-sided DFD problem. This will roughly reduce another logarithmic factor. 
The proof of the following theorem provides the details about this idea.

\begin{theorem}
\label{theorem:TS-DFD}
Given two sequences of points $A = (a_1, a_2, ..., a_m)$ and $B = (b_1, b_2, ..., b_n)$ in the plane, the two-sided DFD problem can be solved in $O((m^{2/3} n^{2/3} \cdot 2^{O(\log^* (m + n))} + m \log n + n \log m) \log (m + n))$ time.
\end{theorem}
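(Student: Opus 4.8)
The plan is to mimic the distance-selection algorithm of Section~\ref{sec:DistanceSelection} (more precisely, its bipartite version), but with the $O(n^{4/3})$ decision procedure replaced by the improved two-sided DFD decision algorithm of runtime $T_D = O(m^{2/3}n^{2/3}\cdot 2^{O(\log^*(m+n))} + m\log n + n\log m)$ described just above. The key point is that the optimum $\delta^*$ is an interpoint distance between a point of $A$ and a point of $B$, so the same stage-by-stage interval-shrinking scheme applies verbatim: we maintain an interval $I_{j-1}$ known to contain $\delta^*$, run the bipartite analogue of Lemma~\ref{lem:20} on $(A,B,I_{j-1})$ to obtain collections $\Gamma_{j-1}$ (certified pairs) and $\Pi_{j-1}$ (few uncertain pairs), build LPS-expander graphs on the (rebalanced) pieces of each $A_t\times B_t$, assign the same weights, partition the induced distances into $O(1)$ weight-balanced buckets, and locate the bucket containing $\delta^*$ with $O(1)$ calls to the DFD decision algorithm; set $I_j$ to that bucket. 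The analysis of Lemma~\ref{lem:40} carries over (with $n$ replaced by $m+n$ and the thresholds adjusted as in the bipartite paragraph at the end of Section~\ref{sec:DistanceSelection}), so the interval loses a constant fraction of the eligible distances each stage, and after $O(\log(m+n))$ stages we reach the terminating condition and finish by the analogue of Lemma~\ref{lem:30}: explicitly listing the $O((m^{2/3}n^{2/3} + m\log n + n\log m)\log(m+n))$ induced distances, computing by a single decision-type count how many lie below the left endpoint, and selecting the appropriate rank in linear time.

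The steps, in order: (1) state the bipartite partial-BRS lemma obtained by plugging Lemma~\ref{lem:20} into the asymmetric case-split of Theorem~\ref{theorem:PBRS}, with $r=n/m$ (resp. $m/n$) when $m\le n<m^2$ (resp. symmetric), and a brute-force split when one size dominates the square of the other; this gives $\Gamma$ with $\sum_t|A_t|,\sum_t|B_t| = O(m^{2/3}n^{2/3} + m\log n + n\log m)$ and $\Pi$ recording only $O(m^{2/3}n^{2/3}\log^2\log(m+n))$ uncertain pairs, all in time $O(m^{2/3}n^{2/3} + m\log n + n\log m)$. (2) Fix the per-stage threshold at $(m^{2/3}n^{2/3} + m\log n + n\log m)\log(m+n)$ for $\sum_t|A_t|\cdot|B_t|$; below it, invoke the terminal step. (3) Above it, do the expander construction, weighting, $O(1)$-bucket partition, and $O(1)$ decision calls exactly as in Section~\ref{sec:DistanceSelection}, so each non-terminal stage costs $O(T_D)$. (4) Re-run the proof of Lemma~\ref{lem:40}: the $\mathcal{G}_1$ bound via the cutting-plus-expander argument (Corollary 2.5 of \cite{ref:KatzAn97}) and the $\mathcal{G}_2$ bound via the weight cap $m_{j-1}/2$ are unchanged, and the contribution of $\Pi$ is negligible against the threshold, giving $n_j\le\rho\cdot n_{j-1}$ for a constant $\rho<1$. (5) Conclude: $O(\log(m+n))$ stages each costing $O(T_D)$, plus one terminal stage costing $O((m^{2/3}n^{2/3} + m\log n + n\log m)\log(m+n))$, yields the claimed bound since $T_D$ is of that same order up to the $2^{O(\log^*(m+n))}$ factor.

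The main obstacle — really a bookkeeping obstacle rather than a conceptual one — is making the bipartite partial-BRS lemma come out with the right vertex-set bound $O(m^{2/3}n^{2/3} + m\log n + n\log m)$ while keeping $\Pi$ small enough; one must be careful that the two-level application (first Lemma~\ref{lem:10} on $A,B$, then Lemma~\ref{lem:10} again on each uncertain sub-pair, as in Lemma~\ref{lem:20}) is threaded correctly through all four regimes of the $(m,n)$ case split, and that in the degenerate regimes ($n\ge m^2$ or $m\ge n^2$) the "brute force" on the uncertain pairs does not blow up the count — here one checks directly that the number of uncertain pairs stays $O(m^{2/3}n^{2/3}\log^2\log(m+n))$, which is dominated by the threshold. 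A secondary check is that the terminal-stage count "how many $A$–$B$ distances are $\le\alpha_{j-1}$" can be computed within the target time; this follows from the $O(m^{2/3}n^{2/3}+m\log n+n\log m)$ bipartite counting routine of \cite{ref:WangUn23}. Everything else is a transcription of the arguments already given in Sections~\ref{sec:BRS} and~\ref{sec:DistanceSelection}.
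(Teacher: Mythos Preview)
Your proposal follows the paper's approach almost exactly: staged interval shrinking via the bipartite partial BRS, expander graphs with the Cole-like weighting, $O(1)$ decision calls per non-terminal stage, and the Lemma~\ref{lem:40} analysis to get a constant-fraction shrinkage. That part is correct and matches the paper.

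There is, however, a genuine error in your terminal stage. You propose to ``comput[e] by a single decision-type count how many lie below the left endpoint, and select the appropriate rank in linear time,'' invoking the bipartite counting routine of \cite{ref:WangUn23}. This is the terminal step of the \emph{distance selection} algorithm (Lemma~\ref{lem:30}), where one knows the target rank $k$ in advance and can therefore reduce to a selection problem after counting $k_{\alpha_{j-1}}$. In the two-sided DFD problem there is no such rank: $\delta^*$ is defined as the minimum leash length satisfying a reachability condition, not as the $k$-th smallest interpoint distance for any known $k$. So rank selection cannot locate $\delta^*$.

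The fix is simple and is what the paper does: once the $O((m^{2/3}n^{2/3}+m\log n+n\log m)\log(m+n))$ candidate distances from $\Gamma_{j-1}\cup\Pi_{j-1}$ are listed, perform a binary search over them using the DFD decision algorithm itself (not a counting routine). This costs $O(\log(m+n))$ calls to the $T_D$-time decision procedure, which is within the stated bound. With this correction your argument is complete and coincides with the paper's proof.
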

\begin{proof}
    \label{proof:theorem-TS-DFD}
    Following our distance selection algorithm, we run in stages and each $j$-th stage will compute an interval $I_j$ that contains $\delta^*$.
    In the $j$-th stage, we first perform the partial BRS on point sets $A$ and $B$ with respect to interval $I_{j - 1}$, in the same way as before. This produces a collection $\Gamma$ of $(m^{2/3} n^{2/3} / \log^4 \log(m^2 / n) + m^{2/3} n^{2/3} / \log^4 \log(n^2 / m) + m + n)$ edge-disjoint complete bipartite graphs that record some pairs of $A \times B$ whose interpoint distances are in $I_{j - 1}$. The total size of vertex sets of all graphs in $\Gamma$ is $O(m^{2/3} n^{2/3} + m \log n + n \log m)$. In addition, we also have a collection $\Pi$ of complete bipartite graphs representing $O(m^{2/3} n^{2/3} \log^2 \log (m + n))$ uncertain pairs of $A \times B$.
    The total runtime is $O(m^{2/3} n^{2/3} + m \log n + n \log m)$.

    We next compute the number $n_{\Gamma}$ of distances induced from the graphs of $\Gamma$. If $n_{\Gamma}$ is larger than the threshold $\tau = (m^{2/3} n^{2/3} + m \log n + n \log m) \log (m + n)$, then we use the ``Cole-like'' technique to perform a binary search on the interpoint distances induced from the expander graphs that are built on the vertex sets of the graphs in $\Gamma$, which calls the decision algorithm $O(1)$ times. 
    The runtime for this stage is $O(m^{2/3} n^{2/3} \cdot 2^{O(\log^* (m + n))} + m \log n + n \log m)$. If $n_{\Gamma}\leq \tau$, then we reach the last stage of the algorithm and we can compute $\delta^*$ as follows. We compute the interpoint distances induced from the graphs in $\Gamma$ and $\Pi$. The total number of such distances is $O((m^{2/3} n^{2/3} + m \log n + n \log m) \log (m + n))$. Using the decision algorithm and the linear time selection algorithm, a binary search on these interpoint distances is performed to compute $\delta^*$, which takes $O((m^{2/3} n^{2/3} \cdot 2^{O(\log^* (m + n))} + m \log n + n \log m) \log (m + n))$ time as the decision algorithm is called $O(\log (m+n))$ times. The algorithm finishes within $O(\log (m + n))$ stages by an analysis similar to Lemma~\ref{lemma:DS-correctness} (indeed, the proof of Lemma~\ref{lemma:DS-correctness} does not rely on which decision algorithm is used).

    In summary, the total runtime for computing $\delta^*$ is bounded by $O((m^{2/3} n^{2/3} \cdot 2^{O(\log^* (m + n))} + m \log n + n \log m) \log (m + n))$.
\end{proof}

\paragraph{A general (deterministic) algorithmic framework.} The algorithm of Theorem~\ref{theorem:TS-DFD} can be made into a general algorithmic framework for solving geometric optimization problems involving interpoint distances in the plane. Specifically, suppose we have an optimization problem $\calP$ whose optimal objective value $\delta^*$ is equal to $\lVert ab\rVert$ for a point $a\in A$ and a point $b\in B$, with $A$ as a set of $m$ points and $B$ as a set of $n$ points in the plane. The goal is to compute $\delta^*$. Suppose that we have a decision algorithm that can determine whether $\delta\geq\delta^*$ in $T_D$ time for any $\delta$. Then, we can compute $\delta^*$ by applying exactly the same algorithm of Theorem~\ref{theorem:TS-DFD} except that we use the decision algorithm for $\calP$ instead. The total time of the algorithm is $O((m^{2/3} n^{2/3} + m \log n + n \log m + T_D)\cdot \log (m + n))$. Note that in the case  $T_D=o((m^{2/3} n^{2/3} + m \log n + n \log m)\log(m+n))$ this is faster than the traditional binary search approach by repeatedly invoking the distance selection algorithm.

\begin{theorem}
    \label{theorem:framework1}
    Given two sets $A$ and $B$ of $m$ and $n$ points respectively in the plane, any geometric optimization problem whose optimal objective value is equal to the distance between a point of $a\in A$ and a point of $b\in B$ can be solved in $O((m^{2/3} n^{2/3} + m \log n + n \log m + T_D)\cdot \log (m + n))$ time, where $T_D$ is the time for solving the decision version of the problem.
\end{theorem}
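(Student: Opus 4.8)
The plan is to run the algorithm of Theorem~\ref{theorem:TS-DFD} essentially verbatim, the only change being that every call to the two-sided DFD decision procedure is replaced by the given $T_D$-time decision algorithm for $\calP$. We maintain an interval $I_j$ known to contain $\delta^*$, with $I_0=(0,+\infty]$. In the $j$-th stage we run the bipartite partial BRS on $A$ and $B$ with respect to $I_{j-1}$ as described at the end of Section~\ref{sec:DistanceSelection} (which is built on Lemma~\ref{lem:20} together with the asymmetric-case analysis of Theorem~\ref{theorem:PBRS}); this takes $O(m^{2/3}n^{2/3}+m\log n+n\log m)$ time and produces a collection $\Gamma$ of edge-disjoint complete bipartite graphs recording some pairs of $A\times B$ with distances in $I_{j-1}$, with $\sum_t|A_t|,\sum_t|B_t|=O(m^{2/3}n^{2/3}+m\log n+n\log m)$, plus a collection $\Pi$ recording $O(m^{2/3}n^{2/3}\log^2\log(m+n))$ uncertain pairs. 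Note that, unlike the improved decision algorithm for two-sided DFD, these stages use only the partial BRS and therefore incur no $2^{O(\log^*(m+n))}$ overhead; that factor entered Theorem~\ref{theorem:TS-DFD} solely through its particular decision algorithm, and here it is absorbed into the abstract quantity $T_D$.

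Next I would compute the number $n_{\Gamma}$ of distances induced by edges of $\Gamma$ and compare it to the threshold $\tau=(m^{2/3}n^{2/3}+m\log n+n\log m)\log(m+n)$. If $n_{\Gamma}>\tau$, I perform (after the same $|A_t|\ge|B_t|$ renaming and the partition of each $A_t$ into pieces of size $\Theta(|B_t|)$ used in Section~\ref{sec:DistanceSelection}) the construction of $d$-regular LPS-expander graphs on the vertex sets of the graphs of $\Gamma$, form the weighted set $S$ of distances induced by the expander edges, split $S$ into $O(1)$ weight-balanced intervals, and locate the one containing $\delta^*$ with $O(1)$ calls to the decision algorithm for $\calP$; this interval becomes $I_j$, at a cost of $O(m^{2/3}n^{2/3}+m\log n+n\log m+T_D)$ for the stage. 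If instead $n_{\Gamma}\le\tau$, this is the last stage: I explicitly list all $O(\tau)$ distances induced by edges of $\Gamma$ and $\Pi$ (the $\Pi$ part contributes only $O(m^{2/3}n^{2/3}\log^2\log(m+n))=O(\tau)$ of them) and binary-search among them using the decision algorithm and linear-time selection, computing $\delta^*$ with $O(\log(m+n))$ decision calls, for a total of $O((m^{2/3}n^{2/3}+m\log n+n\log m+T_D)\log(m+n))$ time.

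For termination and correctness I would invoke the analysis of Lemma~\ref{lemma:DS-correctness} unchanged: its proof rests only on the cutting/expander counting argument and the weight-balancing step and is independent of which decision algorithm is used, so the number of distances of $\{\lVert ab\rVert\ |\ a\in A, b\in B\}$ lying in $I_j$ drops by a constant factor per stage, giving $O(\log(m+n))$ stages. Summing the per-stage cost $O(m^{2/3}n^{2/3}+m\log n+n\log m+T_D)$ over all stages and adding the last-stage cost yields the claimed bound.

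The only point that really requires care — the crux — is verifying that the two-sided DFD structure was used in Theorem~\ref{theorem:TS-DFD} only through (i) its decision oracle and (ii) the problem-agnostic bipartite partial-BRS and ``Cole-like'' selection machinery; once this separation is made explicit, substituting a generic $T_D$-time decision algorithm is purely mechanical and the complexity bookkeeping above goes through verbatim. In particular, I would double-check that the collection $\Pi$ of uncertain pairs never interacts with the decision algorithm except in the final explicit enumeration, so that its weaker guarantee cannot inflate the $O(1)$ decision-call count of a non-final stage.
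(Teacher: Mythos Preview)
Your proposal is correct and follows exactly the paper's approach: the paper's justification for this theorem is the single paragraph preceding it, which simply says to run the algorithm of Theorem~\ref{theorem:TS-DFD} verbatim with the decision oracle swapped for the given $T_D$-time one. Your write-up merely spells out the per-stage bookkeeping in more detail (and correctly notes that the $2^{O(\log^*(m+n))}$ factor in Theorem~\ref{theorem:TS-DFD} came solely from its specific decision algorithm and is therefore absorbed into $T_D$ here), but the argument is identical.
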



\section{One-sided discrete Fr\'{e}chet distance with shortcuts}
\label{sec:OS-DFD}

In this section, we consider the one-sided DFD problem, defined in Section~\ref{sec:TS-DFD}. Let $\delta^*$ denote the optimal objective value. Avraham, Filtser, Kaplan, Katz, and Sharir~\cite{ref:AvrahamTh15} proposed an a randomized algorithm of $O((m + n)^{6/5 + \epsilon})$ expected time.
We show that using our result in Lemma~\ref{lem:10} for the partial BRS problem the runtime of their algorithm can be reduced to $O((m + n)^{6/5} \log^{7/5} (m + n))$. 

Define $\calE(A,B)=\{\lVert ab\rVert \ | \ a\in A, b\in B\}$. It is known that $\delta^*\in \calE(A,B)$~\cite{ref:AvrahamTh15}. 
The decision problem is to decide whether $\delta\geq \delta^*$ for any $\delta$. 
The authors~\cite{ref:AvrahamTh15} first solved the decision problem in 
$O(m + n)$ (deterministic) time. To compute $\delta^*$, their algorithm has two main procedures. 

The first main procedure computes an interval $(\alpha, \beta]$ that is guaranteed to contain $\delta^*$, and in addition, with high probability the interval contains
at most $L$ values of $\calE(A,B)$, given any $1\leq L\leq mn$; the algorithm runs
in $O((m + n)^{4/3 + \epsilon}/L^{1/3} +(m + n)\log(m + n)\log\log (m+n))$ time, for any $\epsilon>0$. 
More specifically, during the course of the algorithm, an interval $(\alpha, \beta]$ containing $\delta^*$ is maintained; initially $\alpha=0$ and $\beta=\infty$. 
In each iteration, the algorithm first determines, through random sampling, whether the number of values of $\calE(A,B)$ in $(\alpha, \beta]$ is at most $L$ with high probability. If so, the algorithm stops by returning the current interval $(\alpha,\beta]$. Otherwise, a subset $R$ of $O(\log (m+n))$ values of $\calE(A,B)$ is sampled which contains with high probability an approximate median (in the middle three quarters) among the values of $\calE(A,B)$ in $(\alpha, \beta]$. A binary search guided by the decision algorithm is performed to narrow down the interval $(\alpha, \beta]$; the algorithm then proceeds with the next iteration. As such, after $O(\log (m+n))$ iterations, the algorithm eventually returns an interval $(\alpha, \beta]$ with the property discussed above.

The second main procedure is to find $\delta^*$ from $\calE(A,B)\cap (\alpha,\beta]$. This is done by using a \emph{bifurcation tree} technique (Lemma 4.4~\cite{ref:AvrahamTh15}), whose runtime relies on $L'$, the true number of values of $\calE(A,B)$ in $(\alpha,\beta]$. As it is possible that $L'>L$, if the algorithm detects that case happens, then the first main procedure will run one more round from scratch. As $L'<L$ holds with high probability, the expected number of rounds is $O(1)$. If $L'\leq L$, the runtime of the second main procedure is bounded by $O((m + n) L^{1/2} \log(m + n))$. 

As such, the expected time of the algorithm is $O((m + n)^{4/3 + \epsilon}/L^{1/3} +(m + n)\log(m + n)\log\log (m+n) + (m + n) L^{1/2} \log(m + n))$. Setting $L$ to $O((m+n)^{2/5+\epsilon})$ for another small $\epsilon>0$, the time can be bounded by $O((m + n)^{6/5 + \epsilon})$. 

\paragraph{Our improvement.}
We can improve the runtime of the first main procedure by a factor of $O((m + n)^{\epsilon})$, which leads to the improvement of overall algorithm by a similar factor. To this end, by applying Lemma~\ref{lem:10} with $r=(\frac{m + n}{L})^{1/3}$, we first have the following corollary, which improves Lemma 4.1 in~\cite{ref:AvrahamTh15} (which is needed in the first main procedure).

\begin{corollary}
\label{lemma:OS-DFD-UTBRS}
Given a set $A$ of $m$ points and a set $B$ of $n$ points in the plane, an interval $(\alpha, \beta]$, and a parameter $1 \leq L \leq mn$, we can compute in $O((m + n)^{4/3} / L^{1/3} \cdot \log (\frac{m + n}{L}))$ time two collections $\Gamma(A, B, \alpha, \beta)= \{A_t \times B_t\ |\ A_t \subseteq A, B_t \subseteq B\}$ 
     and $\Pi(A, B, \alpha, \beta)= \{A'_s \times B'_s\ |\ A'_s \subseteq A, B'_s \subseteq B\}$ of edge-disjoint complete bipartite graphs that satisfy the conditions of Problem~\ref{prob:BRS},
     with the following complexities: (1) $|\Gamma|=O((\frac{m + n}{L})^{4/3})$; (2) $\sum_t |A_t|, \sum_t |B_t| = O((m + n)^{4/3} / L^{1/3} \cdot \log (\frac{m + n}{L}))$; (3) $|\Pi|=O((\frac{m + n}{L})^{4/3})$; (4) $|A_s'| = O(\frac{mL}{m + n})$ and $|B_s'| = O(\frac{nL}{m + n})$ for each $A_s' \times B_s' \in \Pi$; (5) the number of pairs of points recorded in $\Pi$ is $O((m + n)^{4/3} L^{2/3})$.
\end{corollary}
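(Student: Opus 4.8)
The plan is to obtain this corollary as a direct instantiation of Lemma~\ref{lem:10}: I would apply that lemma to the point sets $A$ and $B$ with the same interval $(\alpha,\beta]$ and the parameter $r=\Theta\big((\tfrac{m+n}{L})^{1/3}\big)$, and then substitute this value of $r$ into the running time and the five complexity bounds that Lemma~\ref{lem:10} guarantees, simplifying with the elementary inequalities $\max\{m,n\}\le m+n$ and $mn\le (m+n)^2$. The value $r=(\tfrac{m+n}{L})^{1/3}$ is precisely the choice that balances the ``certain'' work $O(mr+nr)$ against the size $O\big(\tfrac{mL}{m+n}\cdot\tfrac{nL}{m+n}\big)$ of each uncertain block in $\Pi$, which is the intuition for why it is the right one.

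First I would check that $r$ is admissible for Lemma~\ref{lem:10}, i.e.\ that $1\le r\le\min\{m^{1/3},n^{1/3}\}$; if $(\tfrac{m+n}{L})^{1/3}$ falls outside this range I would clamp it to the nearest endpoint. In those degenerate regimes (which do not arise for the value $L=\Theta((m+n)^{2/5+\epsilon})$ used in the one-sided DFD application) all the claimed bounds hold trivially, so it suffices to argue the generic case, where $r=\Theta((\tfrac{m+n}{L})^{1/3})$ and hence $r^3=\Theta(\tfrac{m+n}{L})$ and $\log r=\Theta(\log\tfrac{m+n}{L})$.

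Then the rest is routine bookkeeping. The running time $O(mr\log r+nr)$ of Lemma~\ref{lem:10} becomes $O\big((m+n)^{4/3}/L^{1/3}\cdot\log\tfrac{m+n}{L}\big)$ after substituting $r$ and using $m,n\le m+n$. Bounds (1) and (3) come from $|\Gamma|,|\Pi|=O(r^4)=O\big((\tfrac{m+n}{L})^{4/3}\big)$; bound (2) is the same expression as the running time; bound (4) follows from $|A'_s|=O(m/r^3)=O(\tfrac{mL}{m+n})$ and $|B'_s|=O(n/r^3)=O(\tfrac{nL}{m+n})$; and bound (5) follows from the $O(mn/r^2)$ estimate of Lemma~\ref{lem:10}, which equals $O\big(mnL^{2/3}/(m+n)^{2/3}\big)=O((m+n)^{4/3}L^{2/3})$ once we bound $mn$ by $(m+n)^2$. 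The two defining conditions of Problem~\ref{prob:BRS} are inherited verbatim from Lemma~\ref{lem:10}, since $A$, $B$, and $(\alpha,\beta]$ are passed through unchanged.

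I do not expect a genuine obstacle: the corollary is engineered so that everything falls out of Lemma~\ref{lem:10} under the stated choice of $r$. The only steps needing a little care are (i) confirming the parameter range and dismissing the degenerate cases, and (ii) the small amount of slack in bound~(5), where replacing $mn$ with $(m+n)^2$ is exactly what turns $mn/r^2$ into the clean form $(m+n)^{4/3}L^{2/3}$ stated in the corollary.
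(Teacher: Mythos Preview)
Your proposal is correct and matches the paper's own argument essentially verbatim: the paper derives the corollary in one line by applying Lemma~\ref{lem:10} with $r=\big(\tfrac{m+n}{L}\big)^{1/3}$, and you do the same substitution and simplification. If anything, your treatment is slightly more careful, since you explicitly address the admissibility constraint $1\le r\le\min\{m^{1/3},n^{1/3}\}$, which the paper leaves implicit.
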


Replacing Lemma 4.1 in~\cite{ref:AvrahamTh15} by our results in Corollary~\ref{lemma:OS-DFD-UTBRS} and following the rest of the algorithm in~\cite{ref:AvrahamTh15} leads to an algorithm to compute $\delta^*$ in $O((m + n)^{6/5} \log^2 (m + n))$ expected time. To make the paper more self-contained, we present some details below. Also, we put the discussion in the context of a more general algorithmic framework (indeed, a recent result of Katz and Sharir~\cite{ref:KatzEf21arXiv} already gave such a framework; here we improve their result by a factor of $O((m+n)^{\epsilon})$ due to Corollary~\ref{lemma:OS-DFD-UTBRS}). 

\paragraph{A general (randomized) algorithmic framework.}
Suppose we have an optimization problem $\calP$ whose optimal objective value $\delta^*$ is equal to $\lVert ab\rVert$ for a point $a\in A$ and a point $b\in B$, with $A$ as a set of $m$ points and $B$ as a set of $n$ points in the plane. The goal is to compute $\delta^*$. Suppose that we have a decision algorithm that can determine whether $\delta\geq\delta^*$ in $T_D$ time for any $\delta$. With the result from Corollary~\ref{lemma:OS-DFD-UTBRS}, we have the following lemma. Define $\calE(A,B)$ in the same way as above. 

\begin{lemma}
\label{theorem:LDistancesFramework}
Given any $1\leq L\leq mn$, there is a randomized algorithm that can compute an interval $(\alpha, \beta]$ that contains $\delta^*$ and with high probability contains at most $L$ values of $\calE(A,B)$; the expected time of the algorithm is $O((m + n)^{4/3} / L^{1/3} \cdot \log^2 (m + n) + T_D \cdot \log (m + n) \cdot \log\log (m + n))$.
\end{lemma}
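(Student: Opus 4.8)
The plan is to adapt the first main procedure (the ``interval shrinking'' step) of Avraham, Filtser, Kaplan, Katz, and Sharir~\cite{ref:AvrahamTh15}, replacing their batched range searching lemma by Corollary~\ref{lemma:OS-DFD-UTBRS}. We maintain an interval $(\alpha,\beta]$ that always contains $\delta^*$, initialized to $(0,\beta_0]$ where $\beta_0$ is an upper bound on the diameter of $A\cup B$, so that initially $(\alpha,\beta]$ contains every value of $\calE(A,B)$; the endpoint $\delta^*$-membership is preserved deterministically since every narrowing step is guided by the exact decision algorithm. Let $N$ denote the number of (ordered) pairs $(a,b)\in A\times B$ with $\lVert ab\rVert\in(\alpha,\beta]$; since $N$ is at least the number of values of $\calE(A,B)$ in $(\alpha,\beta]$, it suffices to drive $N$ down to $L$. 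The procedure runs in iterations; each iteration either certifies (with high probability) that $N\le L$ and returns $(\alpha,\beta]$, or shrinks $(\alpha,\beta]$ so that $N$ decreases by a factor of at least $7/8$ with high probability.

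In one iteration I would first invoke Corollary~\ref{lemma:OS-DFD-UTBRS} on $A$, $B$, the current $(\alpha,\beta]$, and the parameter $L$. This produces, in $O((m+n)^{4/3}/L^{1/3}\cdot\log(m+n))$ time, the collections $\Gamma$ (every recorded pair has distance in $(\alpha,\beta]$) and $\Pi$ (uncertain pairs), with $|\Gamma|,|\Pi|=O((m+n)^{4/3}/L^{4/3})$, total vertex-set size $O((m+n)^{4/3}/L^{1/3}\cdot\log(m+n))$, and $M_\Pi=O((m+n)^{4/3}L^{2/3})$ pairs recorded in $\Pi$. By Problem~\ref{prob:BRS}, every in-interval pair is recorded in exactly one graph of $\Gamma\cup\Pi$, so writing $M_\Gamma=\sum_t|A_t||B_t|\le N$ for the number of pairs in $\Gamma$ we have $N\le M_\Gamma+M_\Pi$. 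From the per-graph pair counts I build, in $O(|\Gamma|+|\Pi|)=O((m+n)^{4/3}/L^{1/3})$ time, an alias-style table that samples a pair uniformly from all $M_\Gamma+M_\Pi$ recorded pairs in $O(1)$ time; testing whether a sampled pair lies in $(\alpha,\beta]$ costs $O(1)$, so one uniform random \emph{in-interval} pair can be produced by rejection sampling, whose success probability per trial is $N/(M_\Gamma+M_\Pi)\ge N/(N+M_\Pi)$. Since we only continue when $N>L$, this is $\Omega(L/M_\Pi)=\Omega(L^{1/3}/(m+n)^{4/3})$; hence $O((m+n)^{4/3}/L^{1/3})$ expected trials per in-interval sample. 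Using $O((m+n)^{4/3}/L^{1/3}\cdot\log(m+n))$ random samples and a Chernoff bound, I can (i) decide correctly with high probability whether $N\le L$ -- if so, return $(\alpha,\beta]$; and when $N>L$, (ii) collect a set $R$ of $O(\log(m+n))$ values of $\calE(A,B)$, namely the distances of $O(\log(m+n))$ uniform in-interval pairs, that with high probability contains a value $v^*$ whose rank among the $N$ in-interval distances lies in $[N/8,\,7N/8]$. Finally I sort $R$ and binary-search for $\delta^*$ among its values using the decision algorithm (with $\alpha,\beta$ as sentinels), which takes $O(\log\log(m+n))$ calls and $O(T_D\log\log(m+n))$ time and returns consecutive values $v_i<v_{i+1}$ of $R$ with $v_i<\delta^*\le v_{i+1}$; we set $(\alpha,\beta]\leftarrow(v_i,v_{i+1}]$ and proceed to the next iteration.

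For correctness and the iteration count: since $v^*\in R$ lies in the middle three quarters, whichever sub-interval of $(\alpha,\beta]$ contains $(v_i,v_{i+1}]$ holds at most $7N/8$ in-interval pairs, so $N$ shrinks by a factor $\le 7/8$ each iteration with high probability; as the initial value of $N$ is at most $mn$, after $O(\log(m+n))$ iterations $N\le L$, and the explicit $N\le L$ test terminates earlier if it happens sooner. A union bound over the $O(\log(m+n))$ iterations keeps the total failure probability $o(1)$, and a round that behaves badly (too many rejection trials, or an incorrect $N\le L$ verdict) is detected and restarted, costing an $O(1)$ expected factor. Each iteration therefore spends $O((m+n)^{4/3}/L^{1/3}\cdot\log(m+n))$ expected time on range searching and sampling and makes $O(\log\log(m+n))$ decision calls; over $O(\log(m+n))$ iterations this totals $O((m+n)^{4/3}/L^{1/3}\cdot\log^2(m+n)+T_D\cdot\log(m+n)\cdot\log\log(m+n))$, as claimed.

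The main obstacle is precisely that $\Pi$ records far more than $L$ pairs, so $\Gamma$ alone is not a rich enough universe to sample from and one must use $\Gamma\cup\Pi$; the heart of the argument is the bound $M_\Gamma+M_\Pi\le N+M_\Pi$ together with $M_\Pi=O((m+n)^{4/3}L^{2/3})$, which is exactly what caps the number of rejection trials at $O((m+n)^{4/3}/L^{1/3})$ whenever $N>L$ and so keeps the sampling within the per-iteration budget. A related subtlety is making sure the extracted approximate median is an approximate median of \emph{all} $N$ in-interval distances and not merely of those recorded in $\Gamma$; this is handled by always sampling from the combined in-interval universe rather than from $\Gamma$ only. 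I would also need to dispose of the usual boundary cases: replacing the infinite initial $\beta$ by $\beta_0$, and, when $m$ and $n$ are highly unbalanced so that the choice $r=((m+n)/L)^{1/3}$ would violate $1\le r\le\min\{m^{1/3},n^{1/3}\}$, applying the asymmetric reductions already used in the proof of Theorem~\ref{theorem:PBRS}.
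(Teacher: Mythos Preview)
Your proposal follows essentially the same route as the paper: invoke Corollary~\ref{lemma:OS-DFD-UTBRS} each iteration, use random sampling with Chernoff bounds both to test whether at most $L$ interpoint distances remain and to extract an approximate median, then shrink via $O(\log\log(m+n))$ decision calls, for $O(\log(m+n))$ iterations total.

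The one substantive deviation is in how you carry out the $N\le L$ test. The paper computes $|S_1|=M_\Gamma$ exactly and then samples only from $\Pi$ to estimate $|S_2'|$, so the sampling universe has size $M_\Pi=O((m+n)^{4/3}L^{2/3})$ and the Chernoff sample budget $O(M_\Pi/L\cdot\log(m+n))=O((m+n)^{4/3}/L^{1/3}\cdot\log(m+n))$ is immediate. You instead sample from the combined universe of size $M_\Gamma+M_\Pi$; your claimed sample budget for step~(i) then tacitly assumes this universe is $O((m+n)^{4/3}L^{2/3})$, which fails in early iterations where $M_\Gamma$ can be $\Theta(mn)$. This is easily repaired: since you already compute $M_\Gamma$ for the alias table, first check whether $M_\Gamma>L/2$ (in which case $N\ge M_\Gamma>L/2$ and you skip the test), and otherwise $M_\Gamma+M_\Pi=O(M_\Pi)$ and your bound holds. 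With that check added, your argument matches the paper's.
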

\begin{proof}
\label{proof:theorem-LDistancesFramework}
We maintain an interval $(\alpha, \beta]$ (which is initialized to $(0, +\infty]$) containing $\delta^*$ and shrink it iteratively. In each iteration, we first invoke Corollary~\ref{lemma:OS-DFD-UTBRS} to obtain two collections $\Gamma(A, B, \alpha, \beta)$ and $\Pi(A, B, \alpha, \beta)$ of complete bipartite graphs in $O((m + n)^{4/3} / L^{1/3} \cdot \log (\frac{m + n}{L}))$ time. In particular, the graphs of $\Pi(A, B, \alpha, \beta)$ record uncertain point pairs of $A \times B$ that we do not know whether their distances are in $(\alpha, \beta]$. The total number of these uncertain pairs is $M = O((m + n)^{4/3} L^{2/3})$.

Let $S_1$ (resp., $S_2$) denote the set of interpoint distances recorded in collection $\Gamma(A, B, \alpha, \beta)$ (resp., $\Pi(A, B, \alpha, \beta)$). Note that $|S_2|=M$ and all values of $S_1$ are in $(\alpha, \beta]$ while some values of $S_2$ may not be in $(\alpha, \beta]$. Define $S'_2$ to be the subset of distances of $S_2$ that lie in $(\alpha, \beta]$. We need to determine the number of distances of $S_1 \cup S_2$ that lie in $(\alpha, \beta]$, i.e., determine $|S_1|+|S_2'|$. To this end, as $|S_1|=\sum_t|A_t|\cdot |B_t|$ and $\sum_t |A_t|, \sum_t |B_t| = O((m + n)^{4/3} / L^{1/3} \cdot \log (\frac{m + n}{L}))$, $|S_1|$ can be easily computed in $O((m + n)^{4/3} / L^{1/3} \cdot \log (\frac{m + n}{L}))$ time. It remains to determine $|S_2'|$. 
A method is proposed in Lemma~4.2 of \cite{ref:AvrahamTh15} to determine with high probability whether $|S_2'|\leq L/2$. This is done by generating a random sample $R_2$ of $c_2 (M/L \cdot \log (m + n))$ values from $S_2$, for a sufficiently large constant $c_2 > 0$, and then check how many of them lie in $(\alpha, \beta]$.
The runtime of this step is $O(|R_2|)$, i.e., $O(M/L \cdot \log (m + n)) = O((m + n)^{4/3} / L^{1/3} \cdot \log (m + n))$.

If $|S_1| \leq L/2$ and the above approach determines that $|S'_2| \leq L/2$, then with high probability the total number of distances of $\calE(A,B)\cap (\alpha, \beta]$ is at most $L$ and we are done with the lemma. Otherwise, an approach is given in Lemma 4.3 of \cite{ref:AvrahamTh15} to generate a sample $R$ of $O(\log (m + n))$ distances from $S_1\cup S_2$, so that with high probability $R$ contains an approximate median (in the middle three quarters) among the values of $\calE(A,B)$ in $(\alpha, \beta]$; this step takes $O((m + n)^{4/3} / L^{1/3} \cdot \log (m + n))$ time. 

We now call the decision algorithms to do binary search on the values of $R$ to find two consecutive values $\alpha', \beta'$ in $R$ such that $\delta^* \in (\alpha', \beta']$. Note that $(\alpha', \beta']\subseteq (\alpha, \beta]$, and $(\alpha', \beta']$ contains with high probability at most $7/8$ distances of $\calE(A,B)$ in $(\alpha, \beta]$.
As $|R|=O(\log (m+n))$, we need to call the decision algorithm $O(\log\log (m+n))$ times, and thus computing $(\alpha',\beta']$ takes $O(T_D \cdot \log \log (m + n))$ time. This finishes one iteration of the algorithm, which takes $O((m + n)^{4/3} / L^{1/3} \cdot \log (m + n) + T_D \cdot \log \log (m + n))$ time in total.


We then proceed with the next iteration with $(\alpha,\beta]=(\alpha',\beta']$. The exptected number of iterations of the algorithm is $O(\log (m+n))$. Hence, the expected time of the overall algorithm is $O((m + n)^{4/3} / L^{1/3} \cdot \log^2 (m + n) + T_D\cdot \log (m + n) \cdot \log\log (m + n))$. 
\end{proof}


With the interval $(\alpha,\beta]$ computed by Lemma~\ref{theorem:LDistancesFramework}, the next step is to compute $\delta^*$ from $\calE(A,B)\cap (\alpha,\beta]$. This is done using bifurcation tree technique, which was initially proposed in 
\cite{ref:AvrahamTh15} but was made more general in \cite{ref:KaplanTh23}. Suppose the decision problem can be solved in $T^*_D$ time by an algorithm with the following {\em special property}: All critical values are interpoint distances (i.e., if we simulate the decision algorithm on $\delta^*$ without knowing $\delta^*$ like the standard parametric search~\cite{ref:MegiddoAp83}, then every value that needs to be compared to $\delta^*$ during the algorithm is required to be an interpoint distance; such a value is called a {\em critical value}).\footnote{Katz and Sharir~\cite{ref:KatzEf21arXiv} originally proposed a general algorithm framework but overlooked this issue that a special decision algorithm is required, as acknowledged in \cite{ref:KaplanTh23}.} 
Then, the runtime of this step is $O(T^*_D\cdot L^{1/2}\cdot \log^{1/2}(m+n))$ (see \cite[Section~3.3]{ref:KaplanTh23} by replacing $n$ with $n+m$). In fact, by a more careful analysis, we can bound the time by $O(\sqrt{L\cdot T_D\cdot T_D^*\cdot  \log(m+n)})$ (note that $T_D\leq T_D^*$ and thus this new bound is better than the previous one). We briefly explain this. For simplicity, we use the notation from \cite[Section~3.3]{ref:KaplanTh23} without explanation. The main observation is that we can still use an ordinary decision algorithm of $T_D$ time to do binary search on critical values during the algorithm. Hence, each phase of the algorithm costs $O(XY+C_0(T)+T_D\cdot \log n)$ time. The number of phases is at most $\max\{L/X,T^*_D/Y\}$. By setting $Y=\sqrt{T^*_D\cdot T_D\cdot \log n/L}$ and making $L/X=T^*_D/Y$, we have $XY=T_D\cdot \log n$, and thus the total time is bounded as stated above (by replacing $n$ with $n+m$). 


In summary, the total time of the algorithm is $O((m + n)^{4/3} / L^{1/3} \cdot \log^2 (m + n) + T_D\cdot \log (m + n) \cdot \log \log (m + n) + \sqrt{L\cdot T_D\cdot T_D^*\cdot  \log(m+n)})$. We thus have the following theorem. 

\begin{theorem}
    \label{theorem:framework2}
    Given two sets $A$ and $B$ of $m$ and $n$ points respectively in the plane, any geometric optimization problem whose optimal objective value is equal to the distance between a point of $a\in A$ and a point of $b\in B$ can be solved by a randomized algorithm of $O((m + n)^{4/3} / L^{1/3} \cdot \log^2 (m + n) + T_D\cdot \log (m + n) \cdot \log\log (m + n) + \sqrt{L\cdot T_D\cdot T_D^*\cdot  \log(m+n)})$ expected time, for any parameter $1\leq L\leq mn$.
\end{theorem}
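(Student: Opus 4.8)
The plan is to assemble the theorem directly from the two main procedures developed above, treating the generic decision algorithm of running time $T_D$ (and its special variant of running time $T_D^*$ whose critical values are all interpoint distances) exactly as the one-sided DFD decision algorithms were treated. First I would invoke Lemma~\ref{theorem:LDistancesFramework} with the given parameter $L$: this produces an interval $(\alpha,\beta]$ that is guaranteed to contain $\delta^*$ and, with high probability, contains at most $L$ values of $\calE(A,B)$, in expected time $O((m+n)^{4/3}/L^{1/3}\cdot\log^2(m+n) + T_D\cdot\log(m+n)\cdot\log\log(m+n))$. Since by hypothesis $\delta^*=\lVert ab\rVert$ for some $a\in A$ and $b\in B$, we have $\delta^*\in\calE(A,B)$, so it remains to locate $\delta^*$ inside $\calE(A,B)\cap(\alpha,\beta]$.

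For the second procedure I would run the bifurcation tree technique of Kaplan, Katz, Saban, and Sharir (Section~3.3 of \cite{ref:KaplanTh23}), which requires a decision algorithm whose critical values are all interpoint distances; this is precisely the special algorithm of running time $T_D^*$ assumed in the statement. A black-box application yields time $O(T_D^*\cdot L^{1/2}\cdot\log^{1/2}(m+n))$, but I would sharpen it as sketched just before the theorem: within each phase of the bifurcation tree one can still use an ordinary decision algorithm of running time $T_D$ to binary-search the $O(\log n)$ critical values produced in that phase, so a phase costs $O(XY + C_0(T) + T_D\log n)$ time while the number of phases is $\max\{L/X,\,T_D^*/Y\}$. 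Choosing $Y=\sqrt{T_D^*\cdot T_D\cdot\log n/L}$ and taking $X$ so that $L/X=T_D^*/Y$ gives $XY=T_D\log n$, and the product of the per-phase cost and the number of phases telescopes to $O(\sqrt{L\cdot T_D\cdot T_D^*\cdot\log(m+n)})$ (replacing $n$ by $m+n$ throughout, since the problem carries two size parameters).

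Finally I would handle the event that the true number $L'$ of values of $\calE(A,B)$ in $(\alpha,\beta]$ exceeds $L$: the bifurcation tree procedure detects this, and when it happens the entire computation is restarted from scratch. Because $L'\le L$ holds with high probability by Lemma~\ref{theorem:LDistancesFramework}, the expected number of restarts is $O(1)$, so the expected running time is unaffected up to a constant factor. Summing the costs of the two procedures then gives the claimed bound $O((m + n)^{4/3} / L^{1/3} \cdot \log^2 (m + n) + T_D\cdot \log (m + n) \cdot \log\log (m + n) + \sqrt{L\cdot T_D\cdot T_D^*\cdot  \log(m+n)})$, valid for every $1\le L\le mn$.

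The main obstacle is the fine-grained phase analysis in the second step: one has to verify that within a phase an ordinary $T_D$-time decision algorithm genuinely suffices for the internal binary search (so that only $O(\log n)$ decision calls occur per phase, not per critical value), that the bookkeeping term $C_0(T)$ is dominated by the other terms, and that the two ways of terminating a phase—resolving $X$ more distances versus descending $Y$ more levels of the tree—combine to give the stated $\max\{L/X,\,T_D^*/Y\}$ phase count; everything else is routine substitution of parameters and collecting the three additive terms. The first step is a direct citation of Lemma~\ref{theorem:LDistancesFramework}, and the restart argument is a standard constant-expectation amplification, so neither poses difficulty.
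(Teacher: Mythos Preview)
Your proposal is correct and follows essentially the same approach as the paper: invoke Lemma~\ref{theorem:LDistancesFramework} for the interval-shrinking procedure, then apply the bifurcation tree technique of \cite{ref:KaplanTh23} with the sharpened per-phase analysis (using the ordinary $T_D$-time decision algorithm for the internal binary searches and balancing $X,Y$ exactly as you describe), and wrap the whole thing in the standard restart argument. The paper presents the same two procedures and the same parameter-balancing refinement of the bifurcation tree bound in the paragraphs immediately preceding the theorem; your write-up is in fact slightly more explicit about the restart step, which the paper mentions earlier in the one-sided DFD discussion rather than repeating it at the theorem.
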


For the one-sided DFD problem, we have both $T_D, T^*_D=O(m+n)$ as the decision algorithm in \cite{ref:AvrahamTh15} has the special property on the critical values. Setting $L = (m + n)^{2/5}\log^{9/5}(m+n)$ leads to the following result.

\begin{corollary}
    \label{theorem:OS-DFD}
    Given a sequence $A$ of $m$ points and another sequence $B$ of $n$ points in the plane, the one-sided discrete Fr\'{e}chet distance with shortcuts problem can be solved by a randomized algorithm of $O((m + n)^{6/5} \log^{7/5} (m + n))$ expected time.
\end{corollary}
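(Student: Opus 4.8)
The plan is to obtain Corollary~\ref{theorem:OS-DFD} as a direct instantiation of the randomized framework in Theorem~\ref{theorem:framework2}, with $A$ and $B$ the two input sequences and $\delta^*$ the one-sided DFD value. First I would check that the framework applies: (i) $\delta^*$ is the distance between a point of $A$ and a point of $B$, since $\delta^*\in\calE(A,B)$ by~\cite{ref:AvrahamTh15}; (ii) the decision problem (given $\delta$, decide whether $\delta\ge\delta^*$) is solvable in $T_D=O(m+n)$ time, again by~\cite{ref:AvrahamTh15}; and (iii) that same decision algorithm has the \emph{special property} required by Theorem~\ref{theorem:framework2}, namely that every value compared against $\delta^*$ in a simulated run is an interpoint distance of $A\times B$, so that $T^*_D=O(m+n)$ as well. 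Point (iii) is the only place where genuine care is needed --- as the footnote citing~\cite{ref:KaplanTh23} observes, this property is not automatic --- but it holds here because the only $\delta$-dependent comparisons made by the decision procedure of~\cite{ref:AvrahamTh15} are tests of the form $\lVert a_ib_j \rVert\le\delta$.

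Granting (i)--(iii), I would substitute $T_D=T^*_D=O(m+n)$ into the bound of Theorem~\ref{theorem:framework2}, so that for any $1\le L\le mn$ the expected running time is
\[
O\!\left((m+n)^{4/3}/L^{1/3}\cdot\log^2(m+n)\;+\;(m+n)\log(m+n)\log\log(m+n)\;+\;(m+n)\sqrt{L\log(m+n)}\right).
\]

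The next step is to pick $L$ optimally. Writing $N=m+n$, the first and third terms are balanced when $N^{4/3}/L^{1/3}\cdot\log^2 N=N\sqrt{L\log N}$, i.e.\ $L^{5/6}=N^{1/3}\log^{3/2}N$, which gives $L=N^{2/5}\log^{9/5}N$; for $N$ large this lies in $[1,mn]$ (since $mn\ge N/2$ and $N^{2/5}\log^{9/5}N=o(N)$), and small instances are trivial. Plugging this $L$ back in, the third term becomes $N\sqrt{N^{2/5}\log^{9/5}N\cdot\log N}=N^{6/5}\log^{7/5}N$, the first term becomes $N^{4/3}\cdot L^{-1/3}\log^2 N = N^{6/5}\log^{7/5}N$, and the middle term $N\log N\log\log N$ is $o(N^{6/5}\log^{7/5}N)$. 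Hence the total expected time is $O((m+n)^{6/5}\log^{7/5}(m+n))$, proving the corollary.

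I expect the main obstacle to be verifying point (iii) rather than the calculation: the parameter optimization is routine arithmetic, whereas confirming that the decision algorithm of~\cite{ref:AvrahamTh15} can play the role of the $T^*_D$-time ``special'' decision algorithm --- with all critical values being interpoint distances --- requires inspecting that algorithm, and is precisely the subtlety that was overlooked in earlier framework formulations and flagged in~\cite{ref:KaplanTh23}.
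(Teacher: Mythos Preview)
Your proposal is correct and follows essentially the same approach as the paper: the paper likewise notes that both $T_D$ and $T^*_D$ are $O(m+n)$ for the one-sided DFD (because the decision algorithm of~\cite{ref:AvrahamTh15} has the special property), plugs these into Theorem~\ref{theorem:framework2}, and sets $L=(m+n)^{2/5}\log^{9/5}(m+n)$ to obtain the stated bound. Your write-up just makes the arithmetic and the verification of the framework's hypotheses more explicit than the paper does.
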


\paragraph{Reverse shortest paths in unit-disk graphs.}
As discussed in Section~\ref{sec:Introduction}, another immediate application of Theorem~\ref{theorem:framework2} is the reverse shortest path problem in unit-disk graphs. 

For the unweighted case, we can have $T_D=O(n)$ (after points of $P$ are sorted)~\cite{ref:ChanAl16}. However, as indicated in \cite{ref:KaplanTh23}, the algorithm of \cite{ref:ChanAl16} does not have the special property. Instead, an $O(n\log n)$ time decision algorithm is derived in \cite{ref:KaplanTh23} (see its full version on arXiv) and the algorithm has the special property. As such, we have $T^*_D=O(n\log n)$. Hence, applying Theorem~\ref{theorem:framework2} (by replacing $m+n$ with $n$ and setting $L=n^{2/5} \log^{6/5}n$) can solve the unweighted case in $O(n^{6/5} \log^{8/5} n)$ expected time. 

For the weighted case, the decision problem is solvable in $O(n\log^2 n)$ time~\cite{ref:WangNe20}. As indicated in \cite{ref:KaplanTh23}, the algorithm of \cite{ref:WangNe20} does not have the special property. Instead, Kaplan et al.~\cite{ref:KaplanTh23} (see its full version on arXiv) modified the algorithm of \cite{ref:WangNe20} to make the special property holds and the time of the modified algorithm is still $O(n\log^2 n)$. As such, we have both $T_D,T^*_D=O(n\log^2n)$. Hence, applying Theorem~\ref{theorem:framework2} (by replacing $m+n$ with $n$ and setting $L=n^{2/5}/\log^{3/5}n$) can solve the weighted case in $O(n^{6/5} \log^{11/5} n)$ expected time. 

\footnotesize
\bibliographystyle{plain}
\bibliography{reference}
\end{document}